\documentclass[11pt]{article}
\usepackage{graphicx}
\usepackage{mathtools,amssymb,amsthm,fullpage}
\usepackage{xcolor}
\usepackage{url}
\usepackage{tikz}
\usepackage{subcaption}
\usepackage[noend]{algpseudocode}
\usepackage{algorithm}
\definecolor{mygreen}{RGB}{34,164,31}
\usepackage{hyperref}
\hypersetup{
    colorlinks=true,
    linkcolor=blue,
    anchorcolor=red,
    citecolor=mygreen,
    filecolor=magenta,
    urlcolor=cyan}
\usepackage{cleveref}

\newcommand{\eps}{\varepsilon}
\newcommand{\davg}{\overline{d}}
\newcommand{\Dyes}{\mathcal{D}^+}
\newcommand{\Dno}{\mathcal{D}^-}
\DeclareMathOperator{\SD}{SD}
\newcommand{\calA}{\mathcal{A}}
\newcommand{\calD}{\mathcal{D}}
\newcommand{\cC}{\mathcal{C}}
\newcommand{\cF}{\mathcal{F}}
\newcommand{\cI}{\mathcal{I}}
\newcommand{\cM}{\mathcal{M}}
\newcommand{\cS}{\mathcal{S}}
\newcommand{\N}{\mathbb{N}}
\newcommand{\E}{\mathop{\mathbb{E}}}
\newcommand{\ceil}[1]{\left\lceil #1 \right\rceil}
\newcommand{\floor}[1]{\left\lfloor #1 \right\rfloor}
\newcommand{\localec}{\textsf{LocalEC}}
\newcommand{\slack}{\mathrm{sl}}
\newcommand{\vol}{\mathrm{vol}}
\newcommand{\dout}{d^{\mathrm{out}}}
\newcommand{\din}{d^{\mathrm{in}}}
\newcommand{\Vlow}{V_{\mathrm{low}}}
\newcommand{\Flow}{\cF_{\mathrm{low}}}
\newcommand{\Fbig}{\cF_{\mathrm{big}}}
\newcommand{\Fgood}{\cF_{\mathrm{good}}}
\newcommand{\rhomax}{\rho_{\max}}
\newcommand{\Gdir}{\overrightarrow{G}}

\newtheorem{theorem}{Theorem}[section]
\newtheorem{claim}[theorem]{Claim}
\newtheorem{lemma}[theorem]{Lemma}
\newtheorem{fact}[theorem]{Fact}
\theoremstyle{definition}
\newtheorem{definition}[theorem]{Definition}
\crefname{claim}{Claim}{Claims}
\crefname{fact}{Fact}{Facts}

\title{Testing Connectedness and Edge Connectedness\footnote{The conference version of this paper will appear at SOSA 2027.}}
\author{
    Diptaksho Palit\\
    Boston University
    \and
    Ramesh Krishnan Pallavoor\\
    Google
    \and 
    Sofya Raskhodnikova\\
    Boston University
    \and
    Nithin Varma\\
    University of Cologne
}

\begin{document}

\maketitle

\begin{abstract}
    We revisit the classical problem of testing connectedness of graphs represented in the adjacency lists model and obtain optimal bounds. Specifically, we show that $\eps$-testing connectedness of graphs with maximum degree at most $d$ requires $\Omega(\frac{1}{\eps} \log \frac{1}{\eps d})$ queries (when $\eps \le \frac{1}{4d^5}$).
    Our result implies that the connectedness tester of Levi, Pallavoor, Raskhodnikova, and Varma ({\em ACM Trans.\ Comput.\ Theory, 2022}) is optimal.
    We also study the problem of testing $k$-edge-connectedness of both directed and undirected graphs of average degree $\overline{d}$ in the adjacency lists model and design a tester with query complexity $O(\frac{k}{\eps} \log \frac{1}{\eps \overline{d}})$, improving upon the bounds by Forster, Nanongkai, Saranurak, Yang, and Yingchareonthawornchai (SODA `20). Furthermore, we complement our upper bound by showing that every one-sided error $k$-edge-connectedness tester must make $\Omega(k/\eps)$ queries.
\end{abstract}

\section{Introduction}\label{sec:intro}

Edge connectedness is a fundamental graph property and is a measure of the fault tolerance of a network that the graph represents. For integer $k \geq 1$, an undirected (directed, respectively) graph $G$ is \emph{$k$-edge-connected}, or simply \emph{$k$-connected}, if at least $k$ edges must be removed from $G$ in order to make the graph disconnected (not strongly connected, respectively). 

The problem of efficiently determining if a graph is $k$-connected has been studied under various computational models and is one of the earliest problems considered in graph property testing~\cite{GR02}.
The goal of an $\eps$-tester for $k$-edge-connectedness is to determine, with probability at least $2/3$, whether a given graph $G = (V,E)$ is $k$-connected, or whether an $\eps \in (0,1)$ fraction of the representation of $G$ needs to be modified in order to obtain a $k$-connected graph.
We consider graphs represented as adjacency lists and algorithms that access the graph via \emph{neighbor} and \emph{degree} queries.
Formally, a neighbor query specifies a vertex $u \in V$ and an integer $i$, upon which an oracle returns the $i$-th (out-)neighbor of $u$, if it exists.
Similarly, a degree query specifies a vertex $u$, upon which an oracle returns the (out-)degree of $u$.

\paragraph{The model.}
We work in the adjacency lists model for general graphs (also called the unbounded-degree incidence-lists model): a graph with $m$ edges is $\eps$-far from a property if at least $\eps m$ edge modifications are needed to obtain a graph with the property, and, as in prior work on testing connectedness in this model~\cite{LeviPRV22,FNSYY20}, the algorithm knows the number of vertices $n$ and the average degree $\davg$ (equivalently, $m$).
This model differs from the bounded-degree model of Goldreich and Ron~\cite{GR02}, in which the algorithm knows an upper bound $d$ on all degrees, distance is measured relative to $nd$ rather than $m$, and the modified graph must also respect the degree bound.
In the general model, degrees can be highly non-uniform, and a tester has to adapt the amount of work it invests to the degrees of the vertices it samples.
For directed graphs, we assume, as in~\cite{OR11,FNSYY20}, that both outgoing and incoming adjacency lists (and degrees) can be queried; if only outgoing lists can be queried, testing strong connectedness requires $\Omega(\sqrt{n})$ queries~\cite{BenderR02}.

\paragraph{State of the art.}
For general $k$, the best known testers are due to Forster, Nanongkai, Saranurak, Yang, and Yingchareonthawornchai~\cite{FNSYY20}, for both directed and undirected graphs.
In the general model, their testers make $\tilde{O}(k/\eps^2)$ queries, and $\tilde{O}(\min\{k/\eps^2, 1/\eps^3\})$ queries for simple directed graphs, even when $\davg$ is unknown~\cite[Theorem~1.3]{FNSYY20}.
When $\davg$ is known, as in our setting, their bounds are $\tilde{O}(k^2/(\eps^2\davg))$ and $\tilde{O}(\min\{k^2/(\eps^2 \davg),\, k/(\eps^3\davg)\})$, respectively~\cite[Theorem~6.1]{FNSYY20}; these are never worse, since one may then assume $\davg\ge k$.
In the bounded-degree model, their testers make $\tilde{O}(k/\eps)$ queries~\cite[Theorem~1.4]{FNSYY20}.
These bounds were an exponential improvement, in terms of the dependence on $k$, over prior results~\cite{YoshidaI10,OR11}.
For the important special case of undirected connectedness ($k = 1$), the best known tester has query complexity $O\left( \min \left\{ \frac{1}{(\eps \overline{d})^2}, \frac{1}{\eps} \log \frac{1}{\eps \overline{d}} \right\} \right)$~\cite{LeviPRV22}.

\subsection{Our results}

Despite improvements on the upper bound side, little is known about the tightness of the obtained bounds for connectedness testing.
Our first result is a matching lower bound for testing connectedness.
We show that every tester for connectedness of graphs with maximum degree $d$ and average degree $\davg = \frac{d}{2}+1$ requires
$\Omega\!\left(\frac1\eps\log\frac1{\eps\davg}\right)$
queries when $\eps \le \frac{1}{4d^5}$ (\Cref{thm:conn-test-lb}).
In this range of parameters, $\frac{1}{\eps}\log\frac{1}{\eps\davg} = O\big(\frac{1}{(\eps\davg)^2}\big)$, so our bound establishes the optimality of the tester of Levi, Pallavoor, Raskhodnikova, and Varma~\cite{LeviPRV22}.

Our second contribution concerns testing $k$-edge-connectedness.
Building on a local subroutine of Forster, Nanongkai, Saranurak, Yang, and Yingchareonthawornchai~\cite{FNSYY20}, we simplify their algorithm and obtain a tester with query complexity
$O\!\left(\frac{k}{\eps}\log\frac1{\eps\davg}\right)$
for both directed and undirected simple graphs (\Cref{thm:dir-k-connectedness,thm:k-connectedness}).
In the model we study, where $\davg$ is known, the best previous bound for simple graphs is $\tilde{O}\big(\min\{\frac{k^2}{\eps^2\davg}, \frac{k}{\eps^3\davg}\}\big)$~\cite[Theorem~6.1]{FNSYY20}, so our bound is smaller by a factor of $\tilde{\Omega}\big(\min\{\frac{k}{\eps\davg}, \frac{1}{\eps^2\davg}\}\big)$ whenever $\eps\davg<1$.
Moreover, the case $\eps\davg\ge 4$, which our bounds do not explicitly address, is easy: for simple graphs, it suffices to check the degrees of $O\big(\frac{k}{\eps\davg}\big)$ uniformly random vertices (see \Cref{app:large-eps}).

Finally, we complement this upper bound by proving that every one-sided error tester for $k$-edge-connectedness requires $\Omega(k/\eps)$ queries, for graphs of average degree $\davg \ge k+2$ and $\eps \le \frac{1}{32(\davg+1)^2}$ (\Cref{thm:one-sided-lb}).

\subsection{Prior work}\label{sec:prior-work}

Goldreich and Ron~\cite{GR02} gave the first testers for edge and vertex connectedness in the bounded-degree model, initiating a long line of work on sublinear-time algorithms for connectedness-related properties.
For general $k$, the current state of the art is due to Forster, Nanongkai, Saranurak, Yang, and Yingchareonthawornchai~\cite{FNSYY20}, who designed testers for both directed and undirected edge and vertex connectedness with $\tilde{O}(k/\eps)$ queries in the bounded-degree model, and with the bounds stated above in the general model.
Their work substantially improved the dependence on the connectedness parameter over previous algorithms~\cite{GR02,YoshidaI10,YoshidaI12,OR11}.

For the special case of connectedness ($k=1$), Parnas and Ron~\cite{PR02} gave the first sublinear tester in the adjacency list model.
This line of work was subsequently refined by Berman, Raskhodnikova, and Yaroslavtsev~\cite{BermanRY14} and culminated in the
$O\left( \min \left\{ \frac{1}{(\eps \overline{d})^2}, \frac{1}{\eps} \log \frac{1}{\eps \overline{d}} \right\} \right)$-query tester of Levi, Pallavoor, Raskhodnikova, and Varma~\cite{LeviPRV22}.

\subsection{Technical overview}

Our lower bound for testing connectedness is based on a graph primitive, denoted $G_S$, consisting of a cycle of \emph{supernodes}, where each supernode is a $d$-clique with a pendant edge attached to all but two of its vertices.
The key property of $G_S$ is that, to move from a supernode to the next one, a tester has to find the single edge leaving the supernode among $\Theta(d^2)$ adjacency list entries; hence exploring $s$ supernodes costs $\Omega(sd^2)$ queries in expectation, which is a constant fraction of the number of edges in the explored part of the graph.
Our hard distributions consist of a connected instance formed by a single copy of $G_S$ (for the yes case), and disconnected instances formed by many smaller copies of $G_S$ (for the no case), where the size of the small cycles is chosen uniformly at random from a logarithmic range.
The graph primitive prevents the tester from discovering a cycle with significantly fewer queries than its size, while the random choice of cycle size reflects the logarithmic factor from the work investment argument of~\cite{BermanRY14,LeviPRV22}, yielding our $\Omega\left(\frac{1}{\eps}\log\frac{1}{\eps d}\right)$ lower bound.

Our tester for $k$-connectedness builds on the framework of~\cite{FNSYY20}.
Specifically, we use their \localec\ subroutine and observe that, for the purpose of property testing, its volume and slack parameters can be merged into a single parameter.
Combined with the work investment strategy and a separate treatment of vertices of degree less than $k$, this simplification leads to both a cleaner analysis and an improved query complexity.

Finally, our one-sided error lower bound for testing $k$-connectedness is inspired by the connectedness lower bound.
Here, we consider a single cycle size and modify the supernode construction so that every component is itself $k$-connected.
A one-sided error tester can reject only if the answers to its queries are inconsistent with every $k$-connected graph.
We show that, unless the tester makes $\Omega(k/\eps)$ queries, the edges it has not seen can be rewired so as to make the graph $k$-connected without changing any of the answers, leading to the desired lower bound.

\section{Preliminaries}\label{sec:preliminaries}

Throughout the paper, all graphs are simple, and $\log$ denotes the logarithm to base $2$.
For an integer $t \ge 1$, we write $[t] = \{1,\ldots,t\}$.
We assume that the number of vertices $n$ and the average degree $\davg$ of the input graph are known to the algorithm; for an undirected graph $G=(V,E)$, we have $\davg = 2|E|/n$, and for a directed graph, $\davg = |E|/n$ is the average out-degree.
Graphs are represented in the adjacency list model, where the algorithm may make \emph{degree} and \emph{neighbor} queries.
For directed graphs, we assume access to both incoming and outgoing degree and neighbor queries.

The distance of a graph from $k$-edge-connectedness is measured by the number of edge modifications (insertions and deletions) required to obtain a $k$-edge-connected graph.
Specifically, an undirected graph $G=(V,E)$ is \emph{$\eps$-far} from $k$-edge-connectedness if at least $\eps |E|$ edge modifications are required to make $G$ $k$-edge-connected.
The definition for directed graphs is analogous.

For a directed graph $G=(V,E)$ and $S \subseteq V$, let $E(S,V\setminus S)$ denote the set of edges from $S$ to $V\setminus S$, and let $\dout(S) = |E(S,V\setminus S)|$ and $\din(S)=|E(V\setminus S,S)|$.
For a vertex $v$, we write $\dout(v)$ and $\din(v)$ for its out- and in-degree, and we let $\vol(S)=\sum_{u\in S}\dout(u)$ denote the \emph{out-volume} of $S$.
A directed graph is $k$-edge-connected if and only if $\dout(S)\ge k$ for every set $S$ with $\emptyset\neq S\subsetneq V$; an undirected graph is $k$-edge-connected if and only if at least $k$ edges cross every such cut $(S,V\setminus S)$.

\begin{definition}[Tester for $k$-connectedness]
    A randomized algorithm $\mathcal{A}$ is an \emph{$\eps$-tester} for $k$-edge-connectedness if, given query access to a graph $G$, it satisfies the following:
    \begin{itemize}
        \item if $G$ is $k$-edge-connected, then $\Pr[\mathcal{A}\text{ accepts }G]\ge \frac23$;
        \item if $G$ is $\eps$-far from $k$-edge-connectedness, then $\Pr[\mathcal{A}\text{ rejects }G]\ge \frac23$.
    \end{itemize}
    A tester has \emph{one-sided error} if it always accepts $k$-edge-connected graphs.
\end{definition}

Our upper bounds are stated for the interesting range $\eps\davg\le \frac12$, in which $\log\frac{1}{\eps\davg}\geq 1$.\footnote{For larger $\eps$, all our upper bounds hold with $\log\frac{1}{\eps\davg}$ replaced by $1$.}

\section{Lower bound for testing connectedness}\label{sec:conn-lb}

In this section, we prove a lower bound for testing connectedness of graphs that are represented as adjacency lists.
Our lower bound implies that the connectedness tester of Levi et al.~\cite{LeviPRV22} has optimal query complexity when $\eps \le \frac{1}{4d^5}$.
A tester is allowed to make both neighbor queries and degree queries.

\begin{theorem}\label{thm:conn-test-lb}
    For $d \ge 2$ and $\eps \in \left(0, \frac{1}{4d^3}\right]$, every algorithm that $\eps$-tests connectedness on graphs of average degree $d$ in the adjacency list model with degree and neighbor queries requires $\Omega\left(\frac{1}{\eps}\log\frac{1}{\eps \davg}\right)$ queries for sufficiently large inputs\footnote{Our constructions work for $n \ge \frac{1}{\eps^4}$; we have not optimized this.}.
\end{theorem}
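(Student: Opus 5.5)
We prove the bound with Yao's principle: we give two distributions $\Dyes$ (connected graphs) and $\Dno$ (graphs that are $\eps$-far from connected), and show that every deterministic algorithm making $q = c\cdot\frac{1}{\eps}\log\frac{1}{\eps d}$ queries, for a small constant $c$, has statistical distance less than $1/3$ between its views on the two distributions. Both distributions are built from the primitive $G_S$ of the technical overview. A \emph{supernode} is a $d$-clique in which all but two vertices get a pendant leaf. Consecutive supernodes in a cycle are joined by a single edge between the two leafless vertices. So $G_S$ is a cycle of $s$ supernodes, and it has average degree about $d/2+1$, which is $\Theta(d)$ as required.

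\textbf{Distributions.} $\Dyes$ is a single copy of $G_S$ on all $n$ vertices. For $\Dno$, we choose $i$ uniformly from $\{1,\dots,L\}$ with $L=\Theta(\log\frac{1}{\eps d})$ and set $s_i = 2^i$. The graph is then the disjoint union of copies of $G_{S}$ with $s_i$ supernodes each, so there are $n/(s_i d)$ components. Making the graph connected needs at least one edge insertion per extra component, i.e.\ about $n/(s_i d)$ modifications, and $\eps m = \Theta(\eps n d)$. The no-instance is therefore $\eps$-far whenever $s_i \le 1/(\eps d^2)$. This fixes $2^L = \Theta(1/(\eps d^2))$, so $L = \Theta(\log\frac{1}{\eps d})$ as long as $\eps \le 1/(4d^3)$, which is the hypothesis of \Cref{thm:conn-test-lb}. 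In both distributions we apply a uniformly random relabeling of vertices and a random permutation of every adjacency list. Degree queries carry no information, since the degree pattern is identical locally in both.

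\textbf{Indistinguishability.} The algorithm's view is a collection of explored pieces, each found by sampling a vertex (through its random label) and following edges. Two events can distinguish the cases. The first is that some piece closes a cycle of supernodes, which in $\Dno$ requires crossing all $s_i$ supernodes of a component. The second is that two pieces turn out to overlap, i.e.\ a birthday-type collision. Collisions are negligible once $n \ge 1/\eps^4$. The key lemma is that crossing from one supernode to the next requires locating the unique inter-supernode edge among $\Theta(d^2)$ list entries of a randomly permuted supernode. So each crossing costs $\Omega(d^2)$ queries in expectation, and more precisely the probability of making $t$ crossings within $o(td^2)$ queries is small, by a concentration argument over the geometric-like search times. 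Closing a cycle of size $s_i$ thus needs $\Omega(s_i d^2)$ queries invested into a single piece.

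\textbf{Main step.} This is the work-investment (averaging) argument, which I expect to be the main obstacle. Let the algorithm split its budget as $q = \sum_j q_j$ over pieces. A piece with budget $q_j$ can only close cycles of size $s_i \lesssim q_j/d^2$, and the number of pieces able to do so is bounded by the number of budgets of that size. Conditioned on $i$, the probability of detection is at most $\sum_j \Pr[\text{piece } j \text{ closes a cycle}]$. We sum this over the uniformly random scale $i$ and use the fact that a piece started at a random vertex lies in a component of size $s_i d$. With random starting points, the relevant bound per scale $i$ is $\#\{j : q_j \ge s_i d^2\}$, which is at most $q/(s_i d^2)$. Averaging over $i \in [L]$ gives roughly
\[\frac{1}{L}\sum_{i} \min\Bigl\{1,\frac{q}{2^i d^2}\cdot \Pr[\text{component hit}]\Bigr\}.\]
One has to set this up so that a single scale contributes only $O(q\,\eps d/L)$, which makes the total $O(q\eps/L)$-type small unless $q = \Omega(\frac{1}{\eps}L)$. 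Concretely, I would first prove the per-piece tail bound (crossings versus queries) via a martingale or Chernoff bound on the independent search costs. I would then carefully match the $\Dno$ component-size scales against the query budgets, so that the yes instance, which is one giant cycle and never closes, is indistinguishable from no instances except with probability $O(q\eps/\log\frac{1}{\eps d})$.
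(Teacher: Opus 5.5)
Your no-distribution has a fatal flaw. You make \emph{every} component a small cycle with $s_i=2^i$ supernodes, and then one exploration from a single vertex already separates the two distributions. Under $\Dno$ at any scale $i\le L$, the component of that vertex has only $O(2^L d^2)=O(1/\eps)$ adjacency-list entries. So a BFS with budget $C/\eps$ exhausts it and can reject, whereas under $\Dyes$ the component never closes within that budget. Your pair of distributions therefore admits a deterministic $O(1/\eps)$-query distinguisher, and no analysis of it can produce the $\log\frac{1}{\eps d}$ factor. Your averaging formula shows the symptom: the factor $\Pr[\text{component hit}]$ equals $1$ in your construction, so every term $\min\{1,\frac{q}{2^i d^2}\}$ equals $1$ once $q\approx 1/\eps$.

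The missing idea is to make the no-instance only barely $\eps$-far. The paper uses exactly $t=\lceil \eps nd/2\rceil$ small cycles with about $2^{j^\star}$ vertices each. All remaining vertices go into one large cycle $G_S(n-t\,n_{j^\star},d)$, which looks locally like the yes instance. Then a freshly discovered supernode lies on a small cycle with probability only about $\eps d\,2^{j-1}$, which is proportional to the scale. This is what makes the work-investment sum collapse:
\[
\sum_{j:\,S_j\le e_i}2^{j-1}\le 4d\,e_i,
\qquad\text{so}\qquad
\Pr[\cS_{\Dno}]\lesssim \frac{1}{L}\cdot\eps d\cdot 4d\sum_i e_i = O(\eps q/L)
\quad\text{once } \sum_i e_i=O(q/d^2).
\]
Even with the construction fixed, your plan leaves adaptivity unresolved, since the algorithm chooses the per-piece budgets after seeing answers. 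The paper handles this by coupling both distributions to an idealized process $\cI$ whose run is independent of $j^\star$ and of where components start. It then only needs the expected total number of extensions in $\cI$ to be at most $64q/d^2$, plus Markov's inequality, which is simpler than the per-piece concentration bound you propose.
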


\begin{proof}
    We use Yao's minimax principle~\cite{Yao77} as stated by~\cite{RS06}.
    
    \begin{theorem}[Yao's minimax principle, the version from~\cite{RS06}]\label{thm:yao}
        To prove a lower bound $q$ on the worst-case query complexity of a randomized algorithm, it is enough to give
        two distributions on inputs:
        \begin{itemize}
            \item $\Dyes$ on positive instances, and
            \item $\Dno$ on negative instances
        \end{itemize}
        such that it is hard for any $q$-query deterministic algorithm $\calA$ to
        distinguish $\Dyes$ from $\Dno$.
        Let $a_1\dots a_q(G)$ be the answers to the queries on input $G$. Define $\Dyes$-view to be the
        distribution on the subgraphs of $G$ induced by $a_1\dots a_q(G)$ when $G$ is selected according to
        $\Dyes$. Similarly, define $\Dno$-view. By ``hard to distinguish $\Dyes$ from $\Dno$'', we mean
        $\SD\left(\text{$\Dyes$-view},\text{$\Dno$-view}\right)<\frac 1 3$, where $\SD$ denotes the statistical distance between two distributions.
    \end{theorem}
    
    Specifically, we construct distributions $\Dyes$ and $\Dno$ over $n$-vertex graphs that are connected and $\eps$-far from connected, respectively, such that every deterministic algorithm $\calA$ that makes at most $q = c\cdot \frac1\eps \log\frac1{\eps d^3}$ queries, for a sufficiently small constant $c > 0$, fails to distinguish the two distributions with high probability.

    \paragraph{Graph primitives.}
    We use the following graph primitives for our constructions.

    \begin{definition}[Graph $S_d$]\label{def:spider}
        The {\em spider graph $S_d$,} parameterized by degree $d\in\N$, where $d\geq 2$, has $2d-2$ vertices, partitioned into sets $\{ v_1, \dots, v_d \}$ and $\{ u_1, \dots, u_{d-2} \}$.
        The first $d$ vertices, called {\em internal}, form a $d$-clique, and there is an {\em exit edge} between $u_i$ and $v_i$ for all $i \in [d-2]$.
        Vertices $u_1,\ldots,u_{d-2}$ are called {\em leaves}; vertices $v_{d-1}$ and $v_{d}$ are called the \emph{source} and \emph{terminal}, respectively.
    \end{definition}

    \begin{definition}[Graph $G_S(n', d)$]\label{def:gs}
        Let $d,n'\in\N$ be such that $n'\ge 4d-4$ is a multiple of $2d-2$.
        The {\em graph $G_S(n', d)$} is a cycle consisting of $n'/(2d-2)$ {\em supernodes}, where each supernode is a copy of the spider graph $S_d$, with the {\em source} of one supernode adjacent to the {\em terminal} of the next (see \Cref{fig:gs}).
        The edge between the terminal of one supernode and the source of the next one is also called an {\em exit edge}, and the source and the terminal of a supernode are called its {\em ports}.
    \end{definition}

    The graph $G_S(n',d)$ has $n'$ vertices, out of which $\frac{n' d}{2d-2}$ internal vertices have degree $d$ each and $\frac{n' (d-2)}{2d-2}$ leaves have degree $1$ each.
    So the maximum degree is $d$, the average degree is $\davg = \frac{d}{2} + 1$, and the number of edges is $\frac{n'(d+2)}{4}$.
    We will prove a query lower bound of $\Omega\left(\frac1\eps\log\frac{1}{\eps\davg}\right)$ for average degree $\davg = \frac{d}{2} + 1$. 
    This suffices because substituting $d' = 2(d-1)$ does not change the asymptotic bound.

    \begin{figure}[h]
        \centering
        \tikzset{
            intv/.style={circle, draw, fill=white, inner sep=1pt},
            portv/.style={circle, draw=green!40!black, fill=green!50, inner sep=1pt},
            leafv/.style={circle, draw, fill=black!15, inner sep=1pt},
            cliquee/.style={draw=black!40},
            leafexit/.style={draw=red!85!black, line width=0.9pt},
            portexit/.style={draw=blue!80!black, line width=1.2pt}
        }
        \begin{subfigure}{0.4\textwidth}
            \centering
            \begin{tikzpicture}[scale=2]
                \def\R{0.6}
                \def\RR{1.1}
                \foreach \i in {1,...,4} \node[intv] (v\i) at ({60*(\i-1)}:\R) {$v_{\i}$};
                \node[portv] (v5) at (240:\R) {$v_5$};
                \node[portv] (v6) at (300:\R) {$v_6$};
                \foreach \i in {1,...,6} {
                    \foreach \j in {1,...,6} {
                        \ifnum\i<\j
                            \draw[cliquee] (v\i) -- (v\j);
                        \fi
                    }
                }
                \foreach \i in {1,...,4} {
                    \node[leafv] (u\i) at ({60*(\i-1)}:\RR) {$u_{\i}$};
                    \draw[leafexit] (v\i) -- (u\i);
                }
                \draw[portexit, dashed] (v5) -- (240:1.05);
                \draw[portexit, dashed] (v6) -- (300:1.05);
                \node[below] at (240:1.05) {\strut source};
                \node[below] at (300:1.05) {\strut terminal};
            \end{tikzpicture}
            \caption{The spider graph $S_6$.}
            \label{fig:spider}
        \end{subfigure}
        \hfill
        \begin{subfigure}{0.56\textwidth}
            \centering
            \begin{tikzpicture}[scale=1.6]
                \def\n{8}
                \def\R{0.5}
                \def\RR{0.7}
                \def\BIG{2.0}
                \foreach \i in {1,...,\n} {
                    \pgfmathsetmacro{\ang}{360/\n*(\i-1)}
                    \begin{scope}[shift={({\ang}:\BIG)}, rotate=\ang+90]
                        \foreach \k in {1,...,4} \node[intv, inner sep=1.4pt] (v\i_\k) at ({-60*(\k-1)}:\R) {};
                        \node[portv, inner sep=1.4pt] (v\i_5) at (-240:\R) {};
                        \node[portv, inner sep=1.4pt] (v\i_6) at (-300:\R) {};
                        \foreach \k in {1,...,6} {
                            \foreach \m in {1,...,6} {
                                \ifnum\k<\m
                                    \draw[cliquee] (v\i_\k) -- (v\i_\m);
                                \fi
                            }
                        }
                        \foreach \k in {1,...,4} {
                            \node[leafv, inner sep=1.4pt] (u\i_\k) at ({-60*(\k-1)}:\RR) {};
                            \draw[leafexit] (v\i_\k) -- (u\i_\k);
                        }
                    \end{scope}
                }
                \foreach \i in {1,...,\n} {
                    \pgfmathtruncatemacro{\next}{mod(\i,\n)+1}
                    \draw[portexit] (v\i_6) -- (v\next_5);
                }
            \end{tikzpicture}
            \caption{The graph $G_S(80, 6)$, with $8$ supernodes.}
            \label{fig:gs}
        \end{subfigure}
        \caption{The graph primitives used in the proof of \Cref{thm:conn-test-lb}.
        The ports of each supernode (its source and terminal) are green, the other internal vertices are white, and the leaves are light gray.
        Exit edges between an internal vertex and a leaf are red, exit edges between the ports of consecutive supernodes are blue, and the edges of the $d$-cliques are gray.}
        \label{fig:primitives}
    \end{figure}

    \paragraph{Hard distributions.}
    Let $J = \{ j \in \N : d^2 \le 2^j \le \frac{1}{\eps d}\}$ and $L = |J|$.
    For $j\in J$, let $n_j = (2d-2)\floor{\frac{2^j}{2d-2}}$ and $S_j = \frac{n_j}{2d-2}$, so that $G_S(n_j,d)$ has $S_j$ supernodes.
    Let $t = \ceil{\frac{\eps n d}{2}}$.
    Distribution $\Dno$ first chooses $j^\star$ uniformly at random from $J$, and returns a uniformly random isomorphism of a graph consisting of
    \begin{itemize}
        \item $t$ connected components isomorphic to $G_S(n_{j^\star},d)$, called \emph{small cycles}, and
        \item one connected component isomorphic to $G_S\left(n - t\cdot n_{j^\star},d\right)$, called the \emph{large cycle}.
    \end{itemize}
    Distribution $\Dyes$ outputs a uniformly random isomorphism of $G_S(n,d)$.
    In both distributions, the order of every adjacency list is an independent, uniformly random permutation.

    Every graph in $\Dyes$ is connected, and every graph in $\Dno$ is $\eps$-far from connected, since it has $t+1$ connected components, so at least $t \ge \frac{\eps n d}{2} \ge \eps\cdot\frac{n(d+2)}{4}$ edges must be added to connect it.
    We will show that $\calA$ cannot distinguish these two distributions.
    
    Let $N = \frac{n}{2d-2}$ denote the number of supernodes.
    Since $\eps \le \frac{1}{4d^3}$, we have $\log\frac{1}{\eps d^3} \geq 2$, and hence $L \ge \frac12\log\frac{1}{\eps d^3}$.
    

    \paragraph{Views of the algorithm.}
    We make the following simplifying assumption on the view of the algorithm $\calA$: whenever a vertex of a supernode appears in the view for the first time, 
    the labels of all $d$ {\em internal} vertices of the supernode are {\em revealed} to $\calA$, and if this vertex is a leaf, then its unique neighbor is revealed as well.
    The positions of the vertices in the adjacency lists are not revealed.
    A vertex is {\em discovered} if it has appeared in the view, and a supernode is {\em discovered} if at least one of its vertices has appeared in the view.
    A query is \emph{fresh} if it is made to a vertex that is not discovered.
    We consider two discovered supernodes adjacent if an exit edge between them has been revealed, and define the \emph{connected components of the view} accordingly.

    We say that $\calA$ \emph{resolves} an internal vertex $w$ when it queries the adjacency list of $w$ at the position of the exit edge of $w$.
    If $w$ is a port, then resolving $w$ reveals a vertex of the neighboring supernode; we call this an \emph{extension}.
    Since all internal vertices of a discovered supernode are known, the only way for $\calA$ to reach a new supernode is a fresh query or an extension.

    \paragraph{The idealized process and the coupling.}
    We compare both distributions to the following process $\cI$, which answers the queries of $\calA$ as if the input consisted of infinitely many supernodes such that no two explored parts are ever connected.
    Whenever a supernode has to be revealed, because of a fresh query or an extension, $\cI$ creates a new supernode: its internal vertices receive distinct labels chosen uniformly at random among the labels that are not discovered. 
    Note that the run of $\cI$ does not depend on $\Dyes$, $\Dno$, or $j^\star$.

    For $\calD\in\{\Dyes,\Dno\}$, let $P_{\calD}$ denote the process that answers the queries of $\calA$ on $G\sim\calD$, where $G$ is sampled lazily.
    Since $G$ is a uniformly random isomorphism of a fixed graph, in a fresh query of $P_{\calD}$ the queried vertex is uniformly distributed among the undiscovered vertices of $G$.
    Such a vertex is either a leaf of a discovered supernode (we call this a \emph{collision}), or a uniformly random vertex of an undiscovered supernode $X$.
    In the latter case, and also when an extension of $P_{\calD}$ reaches an undiscovered supernode, the new supernode is revealed exactly as in $\cI$. 
    Therefore, we can couple $P_{\calD}$ with $\cI$ as follows.
    We run $\cI$ and, in parallel, sample $G$ lazily: we sample $j^\star$ if $\calD = \Dno$.
    Whenever $\cI$ creates a new connected component of the view by a fresh query, first decide whether this query results in a collision in $P_{\calD}$.
    If not, choose the undiscovered supernode $X$ of $G$ to which the new supernode of $\cI$ is mapped; and every extension of $\cI$ is mapped to the corresponding neighboring supernode in $G$.
    Under this coupling, $P_{\calD}$ and $\cI$ give identical answers until the first query at which a collision occurs or an extension of $P_{\calD}$ reaches a supernode that is already discovered.
    Note that $j^\star$ and the supernodes of $G$ at which the connected components of the view start are chosen using randomness that is independent of the run of $\cI$.

    \paragraph{Bad events.}
    We now define the bad events.
    The first and second are events of the coupled process for $\calD\in\{\Dyes,\Dno\}$, and together they describe the first query at which $P_{\calD}$ and $\cI$ give different answers; the third is an event of the run of $\cI$ only.
    \begin{description}
        \item[$\mathsf{Merge}$ ($\cM_{\calD}$):]
        Before $\mathsf{SmallCycle}$ occurs, a fresh query of $P_{\calD}$ results in a collision, or an extension of $P_{\calD}$ reaches a discovered supernode in a different connected component of the view (that is, two components of the view merge).
        
        \item[$\mathsf{SmallCycle}$ ($\cS_{\calD}$):]
        Before $\mathsf{Merge}$ occurs, an extension of $P_{\calD}$ reaches a discovered supernode in the same connected component of the view (that is, the algorithm discovers a cycle of supernodes).

        \item[$\mathsf{CheapExplore}$ ($\cC$):]
        During the run of $\cI$, the total number of extensions is at least $\frac{q}{c_1 d^2}$, for a sufficiently small constant $c_1 > 0$.
        Since a connected component of the view of $\cI$ with $s$ supernodes is discovered using $s-1$ extensions, this means that the algorithm discovers connected components with $s_1,\ldots,s_K$ supernodes while making at most $q \le c_1 d^2\sum_{i=1}^K (s_i - 1)$ queries.
    \end{description}

    \paragraph{Indistinguishability assuming no bad events.}
    By the coupling, the views of $\calA$ under $P_{\calD}$ and under $\cI$ are identical unless $\cM_{\calD}\cup\cS_{\calD}$ occurs.
    Recall that if two random variables can be coupled so that they are equal except on an event of probability $p$, then their statistical distance is at most $p$.
    Hence $\SD(\calD\text{-view}, \cI\text{-view}) \le \Pr[\cM_{\calD}\cup\cS_{\calD}]$ for both $\calD\in\{\Dyes,\Dno\}$, where $\cI$-view denotes the view of $\calA$ under $\cI$.
    Note that $\cS_{\Dyes}$ never occurs, since closing the only cycle of supernodes of a graph from $\Dyes$ requires $N > q$ extensions.
    For the same reason, $\cS_{\Dno}$ can occur only on a small cycle, as the large cycle has more than $q$ supernodes.

    \paragraph{Probability of bad events.}
    Now it remains to bound the probability of the bad events.
    
    \begin{lemma}\label{lem:merge}
        For both $\calD\in\{\Dyes,\Dno\}$, we have $\Pr[\cM_{\calD}] \le \frac{1}{21}$.
    \end{lemma}
    
    \begin{proof}
        At any point during the execution, at most $q$ supernodes and at most $(d+1) q \le 2dq$ vertices are discovered, since each query reveals at most one new supernode and at most one leaf.
 
        \emph{Collisions.}
        In a fresh query, the queried vertex is uniform among at least $n-2dq$ undiscovered vertices, of which at most $(d-2)q$ are leaves of discovered supernodes.
        By a union bound over at most $q$ fresh queries, a collision occurs with probability at most $\frac{dq^2}{n-2dq}$.
 
        \emph{Merges through extensions.}
        Consider a fresh query that reveals a new supernode $X$, which is uniform among at least $N-q$ undiscovered supernodes.
        Each extension moves one step along a cycle, and $\calA$ makes at most $q$ extensions, so the component of $X$ can later run into an earlier component only if $X$ is within distance $q$ of a supernode discovered before $X$.
        At most $q$ supernodes are discovered before $X$, and at most $q(2q+1)\le 3q^2$ supernodes are within distance $q$ of them, so this happens with probability at most $\frac{3q^2}{N-q}$.
        Charging each merge to the fresh query that started the later of the two components, a union bound over at most $q$ fresh queries shows that two components merge with probability at most $\frac{3q^3}{N-q}\le\frac{6dq^3}{n-2dq}$, where we used $N-q\ge\frac{n-2dq}{2d}$.
 
        Summing these for large enough $n$, we get $\Pr[\cM_{\calD}] \le \frac{7dq^3}{n-2dq} \le \frac{50}{49}\cdot\frac{7dq^3}{n} \le \frac{50}{49}\cdot\frac{7}{200} < \frac{1}{21}$.
    \end{proof}
    
    \begin{lemma}\label{lem:cheap}
        $\Pr[\cC] \le \frac{1}{21}$.
    \end{lemma}
    
    \begin{proof}
        We first compute the expected number of extensions.
        Call an internal vertex of a discovered supernode $X$ a \emph{candidate} if $\calA$ does not yet know whether it is a port.
        When $X$ is discovered, it has at least $d-1$ candidates, at most two of which are ports.
        The exit edge of an internal vertex $w$ is uniform among its $d$ neighbors, so if $f_w$ neighbors of $w$ have been queried, the next query resolves $w$ with probability $\frac{1}{d-f_w}$.
        The ports are also uniform among the $d$ internal vertices of $X$, so if $r_X$ candidates have been resolved, the next resolved candidate is a port with probability at most $\frac{2}{d - 1 -r_X}$.
        An extension is a query that resolves a candidate that is a port. We split the extensions according to the state at the time of the query.
        \begin{itemize}
            \item \emph{Cheap extensions}, with $f_w<\frac d2$ and $r_X<\frac{d-1}{2}$.
            Each query is a cheap extension with probability at most $\frac2d\cdot\frac{8}{d}$, so the expected number of cheap extensions is at most $\frac{16q}{d^2}$.
            \item \emph{Extensions at heavily queried vertices}, with $f_w\ge\frac d2$ and $r_X<\frac{d-1}{2}$.
            At most $\frac{2q}{d}$ vertices receive more than $\frac d2$ queries, and each of them is resolved at most once, as a port with probability at most $\frac8d$.
            So the expected number of these extensions is at most $\frac{16q}{d^2}$.
            \item \emph{Extensions in heavily explored supernodes}, with $r_X\ge\frac{d-1}{2}\ge\frac d4$.
            Split the total number of resolutions to vertices which have at most $\frac{d}{2}$ neighbors queried, and those without.
            The expected number of resolutions with few queries is at most $\frac{2}{d} \cdot q$, and the resolutions with a lot of queries can be bounded above by $\frac{2q}{d}$.
            Thus the expected total number of resolutions is at most $\frac2d\cdot q+\frac{2q}{d}=\frac{4q}{d}$, so by the law of total expectation, the expected number of supernodes with at least $\frac d4$ resolved candidates is at most $\frac{16q}{d^2}$.
            Each has at most two extensions, so the expected number of these extensions is at most $\frac{32q}{d^2}$.
        \end{itemize}
        The expected number of extensions is thus at most $\frac{64q}{d^2}$, and by Markov's inequality, $\Pr[\cC]\le \frac1{21}$.
    \end{proof}

    \begin{lemma}\label{lem:smallcycle}
        $\Pr\big[\cS_{\Dno} \cap \overline{\cC}\big] \le \frac{1}{21}$, and thus
        $\Pr\big[\cS_{\Dno}\big] \le \frac{2}{21}$.
    \end{lemma}
    
    \begin{proof}
        We first bound the probability that a component in the view starts (upon discovering a fresh supernode $X$) on a small cycle.
        Fix $j\in J$, and condition on $j^\star = j$, on the run of $\cI$, and on the choices made for the components created before component $i$.
        The supernode $X$ is uniformly distributed among at least $N-q$ undiscovered supernodes, of which at most $t\cdot S_j$ lie on small cycles.
        The probability that $X_i$ lies on a small cycle is at most
        \[
            \frac{t \cdot S_j}{N-q} \;\le\; \frac{50}{49}\cdot\frac{\eps n d}{2}\cdot\frac{2^j}{2d-2}\cdot\frac{1}{N-q} \;=\; \frac{50}{49}\cdot\eps d\, 2^{j-1}\cdot\frac{N}{N-q} \;\le\; \left(\frac{50}{49}\right)^2 \eps d\, 2^{j-1}.
        \]
        Importantly, this bound holds for every fixed run of $\cI$, because the coupling chooses $j^\star$ and $X_i$ independent to the run of $\cI$.
        In particular, we never need to analyze how the choices of $\calA$ depend on $j^\star$: the view under $\cI$ does not depend on $j^\star$ at all.
        Next, we condition on $\overline{\cC}$.
        This is an event of the run of $\cI$, so it does not change the distribution of $j^\star$ or of the $X_i$.

        For each component $i$ in the view, let $e_i$ denote the number of extensions in that component.
        Taking a union bound over all components and averaging over the choice of $j^\star$, which is uniform over the $L$ elements of $J$, we get
        \begin{align*}
            \Pr\big[\cS_{\Dno}\cap\overline{\cC}\big] \;\le\; \Pr\big[\cS_{\Dno}\mid\overline{\cC}\big]
            &\le \sum_{j\in J}\frac{1}{L}\sum_i \left(\frac{50}{49}\right)^2 \eps d\, 2^{j-1}\cdot \Pr\big[e_i\ge S_j \bigm| \overline{\cC}\big]\\*
            &= \frac{1}{L}\left(\frac{50}{49}\right)^2 \eps d\cdot \E\Big[\sum_i \sum_{j\in J:\, S_j\le e_i} 2^{j-1} \Bigm| \overline{\cC}\Big].
        \end{align*}
        For each component $i$, the inner sum is over cycle sizes that component $i$ is long enough to traverse.
        By definition, $\frac{2^j}{4d-4} \le S_j$, so
        \[
            \sum_{j\in J:\, S_j\le e_i} 2^{j-1} \;\le\; \sum_{j\in\N:\, 2^j\le (4d-4)e_i} 2^{j-1} \;<\; (4d-4)e_i \;\le\; 4d\, e_i.
        \]
        Conditioned on $\overline{\cC}$, we have $\sum_i e_i < \frac{q}{c_1 d^2}$.
        Therefore,
        \[
            \Pr\big[\cS_{\Dno}\cap\overline{\cC}\big] \;\le\; \frac{1}{L}\left(\frac{50}{49}\right)^2 \eps d\cdot 4d\cdot\frac{q}{c_1d^2} \;=\; \left(\frac{50}{49}\right)^2 \frac{4\eps q}{c_1 L} \;\le\; \left(\frac{50}{49}\right)^2\frac{8c}{c_1} \;<\; \frac{1}{21}. \qedhere
        \]
    \end{proof}

    \paragraph{Completing the proof.}
    By the triangle inequality and \Cref{lem:merge,lem:smallcycle},
    \begin{align*}
        \SD\left(\Dyes\text{-view},\Dno\text{-view}\right)
        &\le \SD\left(\Dyes\text{-view},\cI\text{-view}\right) + \SD\left(\cI\text{-view},\Dno\text{-view}\right)\\
        &\le \Pr[\cM_{\Dyes}] + \Pr[\cM_{\Dno}] + \Pr[\cS_{\Dno}] \;\le\; \frac{1}{21}+\frac{1}{21}+\frac{2}{21} \;=\; \frac{4}{21} \;<\; \frac13 .
    \end{align*}
    By Yao's principle, every $\eps$-tester for connectedness of the graphs in our distributions makes more than $q = c\cdot\frac1\eps\log\frac{1}{\eps d^3}$ queries.
    Because $\eps \le \frac{1}{4d^5}$, we get $\frac{1}{\eps d}\ge 4d^4$, so $\log\frac{1}{\eps d^3} = \log\frac{1}{\eps d} - 2\log d \ge \frac12\log\frac{1}{\eps d}$.
    Moreover, $\davg \ge \frac{d}{2}$ and $\log\frac{1}{\eps d}\ge 1$ give $\log\frac{1}{\eps\davg} \le \log\frac{1}{\eps d}+1 \le 2\log\frac{1}{\eps d}$. 
    Hence $q = \Omega\left(\frac{1}{\eps}\log\frac{1}{\eps\davg}\right)$.
\end{proof}

\section{$k$-Connectedness Tester}\label{sec:tester}
In this section, we present our tester for $k$-connectedness.
We first consider the directed setting, where we instantiate the local subroutine of~\cite{FNSYY20} with parameters tailored to our application, yielding an improved query complexity.
We then show a reduction from testing undirected $k$-connectedness to testing directed $k$-connectedness, allowing us to obtain the desired bounds in the undirected setting.

\subsection{Directed $k$-Connectedness Tester}

In this section, we prove the following upper bound for testing $k$-connectedness of directed graphs.

\begin{theorem}\label{thm:dir-k-connectedness}
    There exists a one-sided error $\eps$-tester for $k$-edge-connectedness of simple directed graphs that makes 
    $O\left(\frac{k}{\eps} \log \frac{1}{\eps \davg} \right)$
    queries (with access to both incoming and outgoing adjacency lists).
    Its running time is 
    $O\left(\frac{k^2}{\eps} \log \frac{1}{\eps \davg} \right)$ if the input graph has $m\ge\frac{520k}{\eps}$ edges, and $O(k^2/\eps^2)$ otherwise.
\end{theorem}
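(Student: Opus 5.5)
We plan to prove \Cref{thm:dir-k-connectedness} by combining a structural lemma about $\eps$-far graphs with the \localec\ subroutine of~\cite{FNSYY20} and a work investment strategy. We first recall the guarantee of \localec\ (one-sided, in the form we need): given a seed vertex $x$ and a parameter $\nu$, it runs in time $O(\nu k^2)$ with $O(\nu k)$ queries. If $x$ lies in a set $S$ with $\vol(S)\le \nu$ and $\dout(S)<k$, it finds, with constant probability, a set certifying a cut of size less than $k$. Any set it outputs is a genuine witness, so a rejection is always correct. Our simplification is to set the slack and the volume bound both equal to $\nu$: since we only need to detect sets of out-volume at most $\nu$ whose out-degree is below $k$, merging the two parameters costs only constant factors.

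\textbf{Structural lemma.} If $G$ is $\eps$-far from $k$-connected, then it contains many disjoint ``witness'' sets $S$ with $\dout(S)<k$ or $\din(S)<k$. These sets are either single vertices of out- or in-degree below $k$, or sets with small volume. Concretely, we aim to show there are at least $\Omega(\eps m/k)$ disjoint such sets. We plan a greedy argument in the style of~\cite{FNSYY20,GR02}: each deficient extreme set can be repaired by adding at most $k$ edges, and repairs of disjoint sets interact boundedly, so fewer than $c\eps m/k$ sets would yield fewer than $\eps m$ modifications. Simplicity of $G$ is used here to guarantee that adding the needed edges is possible. It also implies that a set with few outgoing edges has volume at least roughly $|S|(|S|-1)$, which relates volume to size. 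By reversing edges we may restrict attention to $\dout$.

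Vertices of degree below $k$ are handled separately. If at least a $c\eps m/k$-sized collection of witnesses consists of such vertices, then sampling $O(k/(\eps\davg))$ random vertices and querying their in- and out-degrees finds one.

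\textbf{Work investment.} Among the remaining witnesses, which have volume at least $k$, we bucket by volume $\vol(S)\in[2^{i-1},2^i)$ for $i$ up to $\log\frac{k}{\eps\davg}$-ish; larger sets are impossible in number since their total volume exceeds $m$. If bucket $i$ contains $a_i$ disjoint witnesses with $\sum_i a_i\ge c\eps m/k$, then, weighting by size, a random vertex hits a bucket-$i$ witness with probability proportional to $a_i2^{i}/(k n)$-type quantities. The standard Berman--Raskhodnikova--Yaroslavtsev/Levi et al.\ lemma then shows that the following suffices: for each $i$, sample $O\big(\frac{k}{\eps \davg\, 2^i}\big)$ vertices (after an initial $\log$-factor normalization) and run \localec\ with $\nu=2^i$. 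This gives total queries $\sum_i O\big(\frac{k}{\eps\davg 2^i}\cdot 2^i k\big)$ divided appropriately. Getting exactly $\frac{k}{\eps}\log\frac1{\eps\davg}$ requires sampling seeds proportional to out-degree (via random edges, obtainable since $m$ is known) rather than uniformly, so that witness probability scales with $\vol(S)/m$ and each level costs $O(k/\eps)$, over $O(\log\frac{1}{\eps\davg})$ levels. Running time picks up the extra factor $k$ from \localec; the $m<520k/\eps$ case reads the whole graph in $O(k^2/\eps^2)$ time.

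The main obstacle is the structural lemma with the tight $k$-dependence: showing $\Omega(\eps m/k)$ disjoint low-volume deficient sets, with the volume cap $O(k/(\eps\davg))\cdot\davg$, so that the number of levels is only $\log\frac{1}{\eps\davg}$. I would first try to adapt the extreme-set decomposition of~\cite{FNSYY20}, bounding witnesses of volume greater than $k/\eps$ by a counting argument against $m$.
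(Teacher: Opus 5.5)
There is a genuine gap: your plan discards the slack, and without it the claimed dependence on $k$ cannot be reached. You aim for $\Omega(\eps m/k)$ disjoint deficient sets and run \localec\ with a pure volume parameter $\nu$ at cost $O(k\nu)$. Now carry out the accounting you sketch. If $a_i$ of the witnesses have volume about $2^i$, then even with seeds drawn proportionally to out-degree you need about $m/(a_i2^i)$ seeds, each costing $O(k2^i)$. So a level costs about $mk/a_i$, which is $\Theta(k^2/\eps)$ when $a_i\approx\eps m/k$. This is not an artifact of a loose analysis. The number of witnesses is as small as $\eps m/k$ essentially only when they all have $\dout(S)=0$, and a volume-only \localec\ generally spends its whole $\Theta(k\nu)$ budget on seeds outside the witnesses. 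So degree-proportional seeding does not remove the extra factor $k$; it is also not an $O(1)$-query primitive in this model, even with $m$ known. Your quality measure has expectation only about $\eps\davg/k$, so the number of levels also becomes $\log\frac{k}{\eps\davg}$ instead of $\log\frac{1}{\eps\davg}$. Finally, ``setting the slack and the volume bound both equal to $\nu$'' is not a meaningful merging, since the slack is at most $k$.

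The missing idea is to weight by slack throughout. No new structural lemma is needed: \Cref{fact:far-characterization} directly gives disjoint sets with $\sum_S\slack(S)>\eps m$. Sets containing a vertex of out-degree below $k$ are caught by $O(k/(\eps\davg))$ degree samples if their total slack is at least $\eps m/4$. Sets with $\vol(S)>\rhomax\slack(S)$ have total slack less than $m/\rhomax=130k\le\eps m/4$, so they can be discarded. The two \localec\ parameters are merged into the ratio $\rho=\vol/\slack$. With stopping probability $\frac{1}{8\rho}$ and budget $O(k\rho)$, \localec\ returns a set whenever $v\in S'$, $\vol(S')\le\rho\gamma$ and $\dout(S')\le k-\gamma$, so a witness of slack $\gamma$ is $\gamma$ times cheaper to find. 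Define $q_k(v)=\frac{\dout(v)}{\vol(S)}\slack(S)$ for $v\in S\in\Fgood$; then $\E_v[q_k(v)]>\eps\davg/2$. Running \localec\ from a uniformly random $v$ with $\rho=\min\{2^i\dout(v),\rhomax\}$ costs $O(k2^i\davg)$ in expectation. Hence a level costs $O(s_i k2^i\davg)=O(k/\eps)$, plus an additive $O(k2^i\davg)$ that sums to $O(k/\eps)$ over all levels, and there are $O(\log\frac{1}{\eps\davg})$ levels. Scaling $\rho$ by the degree of the sampled vertex is what replaces your degree-proportional seeding. Simplicity is needed to show $q_k(v)<1$, as \Cref{lem:wis} requires; this follows because every set in $\Fgood$ has at least $k+1$ vertices. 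Simplicity does not give the volume lower bound you mention: in a simple graph, $\vol(S)\le|S|(|S|-1)+\dout(S)$ is an upper bound.
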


We use the following characterization of graphs that are far from being $k$-connected.

\begin{fact}[{\cite[Corollary~8]{OR11}}]\label{fact:far-characterization}
    A directed graph $G=(V,E)$ is $\eps$-far from being $k$-edge-connected if and only if there exists a family of disjoint non-empty proper subsets $\cF = \{S_1,\ldots,S_t\} \subseteq V$ satisfying $\sum_{i \in [t]} (k - d^{\text{out}}(S_i)) > \eps m$ or $\sum_{i \in [t]} (k - d^{\text{in}}(S_i)) > \eps m$.
\end{fact}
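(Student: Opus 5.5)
The plan is to prove the two directions separately. The easy direction is a counting argument. The hard direction reduces to Frank's theorem on directed edge-connectivity augmentation, and we then repair the simplicity of the augmented graph.

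\emph{The family gives a lower bound on the distance.} Let $\cF=\{S_1,\ldots,S_t\}$ be disjoint non-empty proper subsets with $\sum_i (k-\dout(S_i))>\eps m$; terms with $\dout(S_i)\ge k$ may be dropped, since they are non-positive. In any $k$-edge-connected graph every $S_i$ has out-degree at least $k$. Deleting an edge never increases any $\dout(S_i)$. Inserting an edge $(u,v)$ increases $\dout(S_i)$ by one exactly when $u\in S_i$ and $v\notin S_i$. Because the $S_i$ are disjoint, this happens for at most one $i$. Each modification therefore reduces the total deficiency $\sum_i \max\{0,k-\dout(S_i)\}$ by at most one. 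Hence more than $\eps m$ modifications, in particular at least $\eps m$, are needed. The in-degree case is symmetric, using the head $v$ of the inserted edge.

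\emph{The converse, via Frank's theorem.} Assume every disjoint family has both deficiency sums at most $\eps m$; we must make $G$ $k$-edge-connected within the allowed budget. I would invoke Frank's augmentation theorem: the minimum number of new edges whose addition makes a digraph $k$-edge-connected equals
\[
\max_{\cF}\ \max\Bigl\{\sum_{S\in\cF}\bigl(k-\din(S)\bigr),\ \sum_{S\in\cF}\bigl(k-\dout(S)\bigr)\Bigr\},
\]
where $\cF$ ranges over subpartitions of $V$ into non-empty proper subsets. This holds whenever the graph has at least two vertices. By the assumption, at most $\eps m$ additions suffice. The strict-versus-non-strict boundary must then be aligned with the definition of ``$\eps$-far''. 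Concretely, I would check that the two conventions, ``at least $\eps m$'' versus ``$>\eps m$'', match exactly as in~\cite{OR11}, adjusting $\eps$ by an arbitrarily small amount if needed.

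\emph{Main obstacle: simplicity.} Frank's theorem produces a multigraph: an added edge may duplicate an existing edge or another added edge, while our model requires simple graphs. My first attempt is a local rewiring argument. Take a $k$-connected augmentation with the minimum number of parallel pairs. If some edge $(u,v)$ is doubled, pick an edge $(x,y)$ with $x,y\notin\{u,v\}$ such that $(u,y)$ and $(x,v)$ are absent, and replace one copy of $(u,v)$ together with $(x,y)$ by $(u,y)$ and $(x,v)$. For $n$ large relative to $k$ (and $n$ large is our regime), such an $(x,y)$ exists. One must then verify that no cut's out-degree drops below $k$ by a careful case analysis on which endpoints lie in $S$. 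If that fails, the fallback is to count one deletion plus one insertion per conflicting pair and bound the number of conflicts. Alternatively, one can cite~\cite{OR11}, whose proof handles exactly this simple-graph issue.
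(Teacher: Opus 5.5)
Your counting argument for the ``if'' direction is correct. Your use of Frank's theorem also settles the converse when parallel arcs may be inserted, up to the boundary issue discussed below: deletions never help, so the distance then equals Frank's augmentation number.

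The gap is the step you call the main obstacle. For simple digraphs it cannot be closed by rewiring or by a citation, because the exact equivalence is false. Take $V=\{u,w,x,y\}$, $k=2$, and the simple digraph with the $m=8$ arcs $(u,w),(w,u),(w,x),(w,y),(x,u),(x,y),(y,u),(y,x)$.
\begin{itemize}
\item Checking all cuts, the only sets with $\dout(S)<2$ are $\{u\}$ and $\{u,x,y\}$, and the only sets with $\din(S)<2$ are $\{w\}$ and $\{w,x,y\}$, each with deficiency exactly $1$.
\item Each of these pairs intersects, so every disjoint family has both sums at most $1$.
\item A single inserted arc fixing both $\{u\}$ and $\{u,x,y\}$ must be $(u,w)$, which is already present, and deletions never help. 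So the simple distance is $2$ (inserting $(u,x)$ and $(y,w)$ works).
\item With $\eps=3/16$, i.e.\ $\eps m=3/2$, the graph is $\eps$-far under either convention, yet no family has a sum exceeding $\eps m$.
\end{itemize}
Replacing the pair $x,y$ by an arbitrarily large $2$-edge-connected digraph gives the same conclusion, keeping $(u,w),(w,u)$ and $(w,x),(w,y),(x,u),(y,u)$ for two of its vertices $x,y$. So ``$n$ large'' does not help. Your swap fails for a concrete reason. Replacing one copy of $(u,v)$ and $(x,y)$ by $(u,y),(x,v)$ lowers $\dout(S)$ by one for every $S$ with $u,y\in S$ and $v,x\notin S$. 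If $(x,y)$ is an arc of $G$, it also costs two extra modifications, which an exact threshold cannot absorb.

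Separately, the boundary issue cannot simply be ``checked'' away. With the paper's definition ($\eps$-far means at least $\eps m$ modifications), the ``only if'' direction already fails with parallel arcs allowed whenever Frank's number equals $\eps m$. For example, take the directed path $v_1\to\cdots\to v_n$ with $k=1$ and $\eps=1/(n-1)$. The distance is $1=\eps m$, but any two sets with $\dout(S)=0$ intersect (they all contain $v_n$), as do any two with $\din(S)=0$, so every family has sums at most $1$. Perturbing $\eps$ does not rescue an ``if and only if''.

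What your argument actually proves is the equivalence for multigraphs, with ``far'' meaning more than $\eps m$ modifications. For simple graphs, the tester of \Cref{thm:dir-k-connectedness} only needs the ``only if'' direction with some slack in $\eps m$, and that needs its own argument rather than the exact statement. For instance, a duplicated new arc $(u,v)$ can be replaced by two new arcs $u\to z\to v$ through a suitable vertex $z$. This never decreases any cut, but it costs one extra insertion.
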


Note that the sets $S\in\cF$ with $\dout(S)\ge k$ can be removed from $\cF$ without decreasing the first sum (and similarly for the second sum).

We use a local algorithm of~\cite{FNSYY20} as a subroutine.
The following lemma is a corollary of \cite{FNSYY20}, but we include a proof in \Cref{app:localec} for completeness.
Specifically, \cite{FNSYY20} required four parameters for their algorithm, but by a careful analysis, we merge two of them.

\begin{lemma}\label{lem:localec}
    There is an algorithm $\localec$ that takes as inputs
    \begin{itemize}
        \item a vertex $v$ of an $m$-edge directed graph $G=(V,E)$, accessed via out-degree and out-neighbor queries,
        \item a cut-size parameter $k \in \mathbb{Z}^+$, and
        \item a \emph{volume-slack ratio} parameter $\rho \in \mathbb{R}$, where $1 \leq \rho \le \frac{m}{130k}$.
    \end{itemize}
    It makes $O(k \rho)$ queries, runs in $O(k^2 \rho)$ time, and outputs either $\bot$ or a set $S$ with $\emptyset\neq S\subsetneq V$ and $\dout(S) < k$.
    Moreover, if there exist $\gamma \in [k]$ and a vertex set $S' \subseteq V$ such that $v \in S'$, $\vol(S') \leq \rho \gamma$, and $\dout(S') \leq k - \gamma$, then $\localec$ outputs a set (rather than $\bot$) with probability at least $3/4$.
\end{lemma}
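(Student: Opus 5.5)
I will reproduce the randomized local-search procedure of~\cite{FNSYY20} (the ``LocalEC'' routine of Forster et al.) and use a single budget parameter $\nu = \Theta(\rho k)$ in place of their separate volume $\nu$ and slack parameters. The algorithm repeats $k$ rounds of a randomized DFS from $v$, at most $k$ times in total. A round explores out-edges until either the explored edge count exceeds $\nu$ or the DFS gets stuck. If it gets stuck, the set $S$ of visited vertices has all its out-edges explored. Before each new round, the algorithm samples a uniformly random explored edge $(x,y)$ and reverses a path from $v$ to $y$ in the current residual graph, which amounts to routing one unit of flow along it.

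If at some round the search from $v$ stops before the budget is exhausted, we output the reachable set $S$. Otherwise, after $k$ rounds we output $\bot$. When a set is output, $v$ can reach at most fewer than $k$ vertices through out-edges of $S$ in the residual graph after fewer than $k$ augmentations. By flow/cut duality this gives $\dout(S)<k$. Properness ($S\neq V$) follows because $\vol(S)\le \nu < m$ once $\rho \le m/(130k)$ and $\nu = 130 k\rho$ or so.

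The query and time bounds are immediate. Each round touches $O(\nu)=O(k\rho)$ edges, so the query count is $O(k\rho)$. Time also accounts for up to $k$ rounds of $O(\nu)$ work, which gives $O(k^2\rho)$. The answers from repeated rounds can be cached, so the queries do not grow with the number of rounds.

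The main step is the completeness guarantee. Fix $\gamma$ and $S'\ni v$ with $\vol(S')\le\rho\gamma$ and $\dout(S')\le k-\gamma$. Each augmentation either uses an edge leaving $S'$ or does not. Augmenting paths that leave $S'$ consume one of the at most $k-\gamma$ out-edges of $S'$. The plan is to show that with good probability at most $k-1$ augmentations (in fact fewer than $k$) can be made before the search is confined to $S'$. Confinement means every further DFS from $v$ stays inside $S'$ and terminates with explored volume at most $\vol(S')\le\rho\gamma<\nu$, which yields an output.

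To get this, I use the key FNSYY argument. In each round that has not terminated, the explored edge set has size more than $\nu$, while edges inside $S'$ number at most $\rho\gamma$. So the sampled edge $(x,y)$ lies outside $S'$ with probability at least $1-\rho\gamma/\nu$. Whenever $y\notin S'$, the reversed path crosses the cut, which decreases the residual $\dout(S')$. I will bound the expected number of rounds whose sample falls inside $S'$ by $k\cdot\rho\gamma/\nu$. This has to stay below roughly $\gamma/4$, which is exactly where merging volume and slack into the ratio $\rho$ is used: we need $\nu \ge 4k\rho$ up to constants. The $\gamma$ ``good'' slack rounds then suffice to exhaust the $k-\gamma$ crossing edges within $k$ rounds.

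By Markov's inequality, with probability at least $3/4$ there are at most $\gamma-1$ wasted rounds. Hence within $k$ rounds the residual out-degree of $S'$ reaches $0$, so the DFS gets stuck inside $S'$ and a set is output. The delicate part I expect is the accounting that each non-wasted augmentation really reduces the residual cut of $S'$ by one, since a path may re-enter $S'$. I will handle this as in~\cite{FNSYY20}, by noting that the net flow out of $S'$ increases by exactly one per path ending outside $S'$.
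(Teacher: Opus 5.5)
\textbf{There is a genuine gap in the query bound, and it comes from your stopping rule.} You give each of the $k$ rounds its own budget of $\nu=\Theta(k\rho)$ explored edges and sample uniformly among them. You need this per-round size: the same out-edges of $S'$ can be re-explored and re-sampled in every round, so the waste probability $\rho\gamma/\nu$ must be summed over $k$ rounds. But caching does not keep the number of distinct queried edges at $O(\nu)$, because a reversal can cut the DFS off from everything it explored before. For instance, suppose $v$ has out-neighbours $a_1,\dots,a_k$ leading to disjoint regions, none reachable from another, each with more than $\nu$ edges. Then round $i$ typically reverses a path through $(v,a_i)$, and round $i+1$ explores $\nu$ fresh edges behind $a_{i+1}$. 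This gives $\Theta(k\nu)=\Theta(k^2\rho)$ queries, which matches the time bound but not the required $O(k\rho)$ queries, and that query bound is exactly what the tester uses.

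The missing idea is the paper's rule. Flip a coin with probability $\frac{1}{8\rho}$ only when an edge is \emph{newly marked}, and end the round at the tail of the first success. Cap the total number of marked edges over all rounds at $\tau=\lceil 128k\rho\rceil$; by Markov on the number of flips needed for $k$ successes, this cap is hit with probability at most $8k\rho/\tau\le\frac1{16}$. Each edge is flipped at most once, and newly marked edges were never reversed. So a round ends inside $S'$ only through a success on one of the at most $\rho\gamma$ original out-edges of $S'$. This gives $\E[Z]\le\gamma/8$ globally over all rounds, rather than a per-round bound, and that is what removes the extra factor $k$.

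\textbf{There is also an error in your completeness step.} You reverse a path to the \emph{head} $y$ of the sampled edge, but $\rho\gamma/\nu$ only bounds the fraction of explored edges whose \emph{tail} lies in $S'$. Edges entering $S'$ from outside can make up most of the explored edges, and reversing a path that ends at $y\in S'$ leaves $\dout(S')$ unchanged. So your waste bound fails as written. You must reverse to the tail; the paper sets $y\gets a$ for $e=(a,b)$.

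In your sampling version you would also need to show that the current out-volume of $S'$ never exceeds $\vol(S')$. This holds, since reversing paths from $v\in S'$ keeps internal edges internal and only weakly decreases $\dout(S')$. The paper's version avoids this issue altogether.

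Your output-correctness step is right in spirit but should be stated directly. No current edge leaves $S$, and at most $k-1$ paths from $v$ were reversed. By the path-reversal observation each reversal lowered $\dout(S)$ by at most one, so $\dout(S)\le k-1$ in the original graph.
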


We also use the following lemma of Berman, Raskhodnikova, and Yaroslavtsev~\cite{BermanRY14}, which underlies their work investment strategy.

\begin{lemma}[{\cite[Lemma~2.3]{BermanRY14}}]\label{lem:wis}
    Let $X$ be a random variable that takes values in $[0,1]$. Suppose $\E[X] \ge \beta>0$, let $t = \ceil{\log \frac{4}{\beta}}$, and let $\delta \in (0,1)$. For all $i \in [t]$, let $p_i = \Pr[X \ge 2^{-i}]$ and $s_i = \ceil{\frac{4\ln (1/\delta)}{2^i\beta}}$. Then $\prod_{i=1}^{t} (1 - p_i)^{s_i} \le \delta$.
\end{lemma}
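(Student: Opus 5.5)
The plan is the standard bucketing argument: first reduce the product to an exponential, then lower-bound a weighted sum of the tail probabilities $p_i$ by $\beta$, using a dyadic decomposition of $\E[X]$.

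\emph{Step 1 (reduce to a sum).} Since $1-p\le e^{-p}$ for all $p\in[0,1]$, we have
\[
\prod_{i=1}^{t}(1-p_i)^{s_i}\le \exp\Big(-\sum_{i=1}^t p_i s_i\Big).
\]
So it suffices to show $\sum_{i=1}^t p_i s_i\ge \ln(1/\delta)$. Because $s_i\ge \frac{4\ln(1/\delta)}{2^i\beta}$, this in turn follows from
\[
\sum_{i=1}^t p_i 2^{-i}\ge \frac{\beta}{4}.
\]

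\emph{Step 2 (dyadic decomposition of the expectation).} Partition $[0,1]$ into the pieces $[0,2^{-t})$ and $[2^{-i},2^{-i+1})$ for $i\in[t]$. For $i=1$ the piece should be the closed interval $[1/2,1]$, which is legitimate because $X\le 1$. On the $i$-th piece, $X<2^{-i+1}$, and the probability of that piece is at most $p_i$. This gives
\[
\E[X]\le 2^{-t}+\sum_{i=1}^t 2^{-i+1}p_i .
\]
By the choice $t=\ceil{\log\frac4\beta}$ we have $2^{-t}\le \beta/4$. Combining with $\E[X]\ge\beta$ yields $\sum_i 2^{-i+1}p_i\ge \frac{3\beta}{4}$, that is, $\sum_i 2^{-i}p_i\ge \frac{3\beta}{8}\ge\frac{\beta}{4}$.

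\emph{Step 3 (conclude).} Substituting the bound from Step 2 into Step 1 gives $\sum_i p_i s_i\ge \frac32\ln(1/\delta)\ge\ln(1/\delta)$, and hence the product is at most $\delta$.

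I expect no real obstacle. The only care needed is the boundary case $X=1$ in the top bucket, and checking that the choice of $t$ makes the residual term $2^{-t}\le\beta/4$.
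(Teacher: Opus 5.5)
Your proof is correct and follows the standard argument for this lemma; the paper itself does not reprove it but cites Berman, Raskhodnikova, and Yaroslavtsev, whose proof proceeds the same way. You bound the product by $\exp(-\sum_i p_i s_i)$ using $1-p\le e^{-p}$, then lower-bound $\sum_i 2^{-i}p_i$ via the dyadic bucketing $\E[X]\le 2^{-t}+\sum_{i=1}^t 2^{-i+1}p_i$ with $2^{-t}\le\beta/4$, which gives $\sum_i p_i s_i\ge\frac32\ln(1/\delta)$ and hence a bound of $\delta^{3/2}\le\delta$.
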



We now describe our tester. Let $\rhomax = \frac{m}{130k}$, where $m = n\davg$ is known to the algorithm.

\begin{algorithm}[h]
	\caption{Directed $k$-Connectedness Tester}
	\begin{algorithmic}[1]
		\Require $\eps \in (0,1)$, $k\ge 1$, $n$, $\davg$, oracle access to a simple directed graph $G$ via in- and out-degree and neighbor queries
		\If{$\davg < k$} \textbf{reject} \label{step:avg-deg}
		\EndIf
		\If{$\eps m < 520k$} query all out-adjacency lists of $G$; \textbf{reject} if $G$ is not $k$-edge-connected, and \textbf{accept} otherwise \label{step:small}
		\EndIf
		\For{$H \in \{G, G^R\}$, where $G^R$ is $G$ with all edges reversed}\label{step:directions}
		\Loop~$s_0 = \ceil{\frac{4k \ln 20}{\eps \davg}}$ times: Sample $v \in V$ uniformly at random; \textbf{reject} if $d^{\mathrm{out}}_H(v) < k$\label{step:low-sampling}
		\EndLoop
		\For {$i = 1$ to $\ceil{\log\left(\frac{8}{\eps \davg}\right)} $}\label{step:for-loop-dir-k-conn}
		\Loop~$s_i = \ceil{\frac{8 \ln 20}{2^i \eps \davg}}$ times:\label{step:repetition-dir-k-conn}
		\State Sample $v \in V$ uniformly at random and query $d^{\mathrm{out}}_H(v)$; \textbf{reject} if $d^{\mathrm{out}}_H(v) < k$\label{step:vertex-sample-dir-k-conn}
		\State Run $\localec(v,k,\min\{2^i \cdot d^{\mathrm{out}}_H(v), \rhomax \})$ on $H$ and \textbf{reject} if it returns a set\label{step:dir-FNSYY}
		\EndLoop
		\EndFor
		\EndFor
		\State \textbf{Accept}
	\end{algorithmic}
\label{alg:dir-k-connectedness}
\end{algorithm}

\begin{proof}[Proof of \Cref{thm:dir-k-connectedness}]
    \emph{One-sided error.}
    If $G$ is $k$-edge-connected, then every vertex has out-degree and in-degree at least $k$, so $\davg \ge k$, and every set $S$ with $\emptyset \neq S \subsetneq V$ satisfies $\dout(S)\ge k$ and $\din(S)\ge k$ (note that $\din(S)$ is the out-degree of $S$ in $G^R$).
    Hence Step~\ref{step:avg-deg} does not reject, Step~\ref{step:small} accepts if it applies, no sampled vertex has out-degree less than $k$ in $H$, and, by \Cref{lem:localec}, \localec\ never returns a set.
    Thus, Algorithm~\ref{alg:dir-k-connectedness} always accepts graphs that are $k$-connected.

    \emph{Rejection of far graphs.}
    Consider a graph $G=(V,E)$ that is $\eps$-far from $k$-connected.
    If Step~\ref{step:avg-deg} or Step~\ref{step:small} applies, then the algorithm rejects with probability $1$. So assume that $\davg\ge k$ and $\eps m\ge 520k$; in particular, $\rhomax \ge 4$.
    By \Cref{fact:far-characterization}, and since the algorithm handles both $G$ and $G^R$ (where $d^{\mathrm{out}}_{G^R}(S) = \din(S)$), we may assume without loss of generality that there exists a family $\cF$ of pairwise disjoint nonempty proper subsets of $V$ such that $\sum_{S\in\cF} (k-\dout(S)) > \eps m$, and consider the iteration of Step~\ref{step:directions} with $H=G$.
    Fix such a family $\cF$, and remove from it all sets $S$ with $\dout(S) \geq k$.

    \begin{definition}[Slack] \label{def:cut-set-slack}
        For a set $S \in \cF$, the \emph{slack} of $S$ is $\slack(S) := k - \dout(S)$.
    \end{definition}

    For each $S \in \cF,$ the slack satisfies $1 \le \slack(S) \leq k$, and $\sum_{S\in\cF}\slack(S) > \eps m$.
    We partition $\cF$ into three parts. Let $\Vlow = \{v \in V : \dout(v) < k\}$ be the set of \emph{low-degree} vertices, and let
    \begin{itemize}
        \item $\Flow = \{S\in\cF : S\cap \Vlow \neq\emptyset\}$,
        \item $\Fbig = \{S\in\cF\setminus\Flow : \vol(S) > \rhomax\cdot \slack(S)\}$, and
        \item $\Fgood = \cF\setminus(\Flow\cup\Fbig)$.
    \end{itemize}
    Since the sets in $\Fbig$ are disjoint, $\sum_{S\in\Fbig}\slack(S) < \sum_{S\in\Fbig}\frac{\vol(S)}{\rhomax}\le \frac{m}{\rhomax} = 130k \le \frac{\eps m}{4}$.
    Hence $\sum_{S\in\Flow}\slack(S) + \sum_{S\in\Fgood}\slack(S) > \frac{3\eps m}{4}$, and one of the following two cases holds.

    \emph{Case 1: $\sum_{S\in\Flow}\slack(S) \geq \frac{\eps m}{4}$.}
    Since $\slack(S)\le k$, we have $|\Flow|\ge\frac{\eps m}{4k}$, and since the sets in $\Flow$ are disjoint and each of them contains a low-degree vertex, $|\Vlow| \geq \frac{\eps m}{4k}$.
    So each vertex sampled in Step~\ref{step:low-sampling} is a low-degree vertex with probability at least $\frac{\eps m}{4kn} = \frac{\eps\davg}{4k}$, and the algorithm fails to reject in this step with probability at most $\left(1-\frac{\eps\davg}{4k}\right)^{s_0} \le e^{-\ln 20} = \frac{1}{20}$.

    \emph{Case 2: $\sum_{S\in\Fgood}\slack(S) > \frac{\eps m}{2}$.}
    We define the \emph{$k$-quality} of the vertices with respect to $\Fgood$.

    \begin{definition}[$k$-quality]\label{def:k-quality}
        For a vertex $v \in V$, its \emph{$k$-quality}, denoted by $q_k(v)$, is defined as
        \[
            q_k(v) = 
            \begin{cases}
                \frac{\dout(v)}{\vol(S)} \cdot \slack(S) & \text{ if } v \in S \text{ for some } S \in \Fgood,\\
                0 & \text{ otherwise.}
            \end{cases}
        \]
    \end{definition}
    This is well defined, because every vertex of $S \in \Fgood$ has $\dout \ge k$, so $\vol(S) \ge k|S| > 0$.

    \begin{claim}\label{clm:exp-k-quality}
        For every vertex $v\in V$, we have $0\le q_k(v) < 1$. Moreover, $\E_{v \in V} \left[ q_k(v) \right] > \frac{\eps \davg}{2}$.
    \end{claim}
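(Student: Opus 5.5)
Both parts follow by working one set at a time, using the definitions of $\Fgood$ and the slack.

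\emph{Bounds on $q_k(v)$.} Nonnegativity is immediate, since slack, degrees and volumes are nonnegative. For the upper bound, fix $v \in S \in \Fgood$. We need $\dout(v)\,\slack(S) < \vol(S)$. Since $\slack(S) \le k - \dout(S)$, it suffices to show $\vol(S) > \dout(v)\,(k - \dout(S))$.

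Here we use that $G$ is simple and that every vertex of $S$ has out-degree at least $k$, because $S \notin \Flow$. Two cases arise.
\begin{itemize}
\item If $S = \{v\}$, then $\dout(S) = \dout(v) \ge k$, so the slack would be nonpositive. This contradicts $S \in \cF$, so $|S| \ge 2$.
\item So assume $|S| \ge 2$. Every out-edge of $v$ either leaves $S$ or goes to another vertex of $S$.
\end{itemize}
The intended inequality is $\vol(S) \ge \dout(v) + (|S|-1)k$. Using $|S|-1 \ge 1$ and $\slack(S) \le k-1$ (since $\dout(S) \ge 1$ would give this; otherwise $\slack(S) \le k$), I need $\dout(v)\,\slack(S) < \dout(v) + (|S|-1)k$.

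This is the delicate step, because $\dout(v)$ can be large. I would use simplicity more carefully: $v$ has at most $|S|-1$ out-neighbors inside $S$, so $\dout(v) \le |S|-1 + \dout(S)$. Hence $|S|-1 \ge \dout(v) - \dout(S)$.

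Then
\[
\vol(S) \ge \dout(v) + (|S|-1)k \ge \dout(v) + k\,(\dout(v) - \dout(S)).
\]
Comparing with $\dout(v)(k - \dout(S))$, the difference is
\[
\dout(v) + \dout(S)\bigl(\dout(v) - k\bigr) > 0,
\]
using $\dout(v) \ge k$ and $\dout(v) \ge 1$. This gives $q_k(v) < 1$.

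\emph{Expectation.} Since the sets in $\Fgood$ are disjoint, summing $q_k$ over $v \in S$ gives exactly $\slack(S)$. Therefore
\[
\sum_{v} q_k(v) = \sum_{S\in\Fgood}\slack(S) > \frac{\eps m}{2}
\]
by the Case~2 assumption. Dividing by $n$ and using $m = n\davg$ yields $\E_{v\in V}[q_k(v)] > \frac{\eps\davg}{2}$.

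The main obstacle is the strict bound $q_k(v) < 1$ for high-degree $v$. It needs both simplicity, to bound the number of out-neighbors of $v$ inside $S$, and the exclusion of low-degree vertices from $S$. The argument above handles this.
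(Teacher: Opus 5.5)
Your proof is correct. The expectation part is the same as the paper's, but you prove $q_k(v)<1$ by a more direct route.

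Write $s=|S|$ and $\sigma=\slack(S)$. The paper first shows $s\ge k+1$, via $\dout(S)\ge s(k-s+1)$. It then combines the crude bound $\vol(S)\ge ks$ with the simplicity bound $\dout(v)\le s-1+k-\sigma$. It needs $s\ge k+1$ to argue that $(s-1+k-\sigma)\sigma$ is maximized at $\sigma=k$, which gives $q_k(v)\le (s-1)/s$.

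You instead keep $\dout(v)$ exact in $\vol(S)\ge\dout(v)+(s-1)k$ and substitute the same simplicity bound. This gives $\vol(S)-\dout(v)\,\sigma\ge\dout(v)+\dout(S)(\dout(v)-k)\ge\dout(v)$. Your route avoids both the size lemma and the monotonicity step. It also yields the stronger uniform bound $q_k(v)\le\frac{\sigma}{\sigma+1}\le\frac{k}{k+1}$.

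Your case split establishing $s\ge 2$, and the aside about $\slack(S)\le k-1$, are never used in the final chain and can be dropped.

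What the paper's route buys is the structural fact that every set in $\cF$ avoiding $\Vlow$ has more than $k$ vertices. The paper reuses this fact in the proof of \Cref{lem:large-eps}.
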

    
    \begin{proof}
        Let $S\in\Fgood$, $s = |S|$, and $\sigma = \slack(S)$, so that $\dout(S) = k-\sigma \le k-1$.
        Every vertex of $S$ has out-degree at least $k$, and at most $s-1$ of its out-edges stay inside $S$.
        So every vertex of $S$ has at least $k-s+1$ out-edges leaving $S$, and $\dout(S)\geq s(k-s+1)$.
        If $s\le k$, then $s(k-s+1) \ge k$, a contradiction.
        Thus $s \ge k+1$.
        Fix $v \in S$.
        Each out-edge of $v$ goes to one of the $s-1$ other vertices of $S$ or leaves $S$, so $\dout(v)\le s-1+\dout(S) = s-1+k-\sigma$. Together with $\vol(S)\ge ks$, we get
        \[
            q_k(v) \;\le\; \frac{(s-1+k-\sigma)\,\sigma}{ks} \;\le\; \frac{(s-1)\,k}{ks} \;=\; \frac{s-1}{s} \;<\; 1,
        \]
        where the second inequality holds because $(s-1+k-\sigma)\sigma$ is increasing in $\sigma$ for $\sigma \le \frac{s-1+k}{2}$, and $\sigma\le k\le\frac{s-1+k}{2}$.
        
        Now, $\sum_{v \in S} q_k(v) = \sum_{v \in S} \frac{\dout(v)}{\vol(S)} \cdot \slack(S) = \slack(S)$ for every $S\in\Fgood$.
        Since the sets are disjoint,
        \[
            \sum_{v \in V} q_k(v) = \sum_{S \in \Fgood} \slack(S) > \frac{\eps m}{2} = \frac{\eps\davg n}{2},
        \]
        and the bound on $\E_{v \in V} \left[ q_k(v) \right]$ follows.
    \end{proof}

    Let $X$ be the random variable denoting the $k$-quality of a vertex sampled in Step~\ref{step:vertex-sample-dir-k-conn}.
    By \Cref{clm:exp-k-quality}, $X$ takes values in $[0,1]$ and $\E[X] > \beta := \frac{\eps\davg}{2}$, and the parameters in Steps~\ref{step:for-loop-dir-k-conn}--\ref{step:repetition-dir-k-conn} are exactly $t = \ceil{\log\frac{4}{\beta}}$ and $s_i = \ceil{\frac{4\ln 20}{2^i\beta}}$.
    Hence, by \Cref{lem:wis} with $\delta = \frac{1}{20}$, with probability at least $\frac{19}{20}$, there is an $i \in [t]$ such that Algorithm~\ref{alg:dir-k-connectedness} samples a vertex $v$ with $q_k(v) \geq 2^{-i}$ in iteration $i$.
    Fix such $i$ and $v$.
    Since $q_k(v) > 0$, there is a set $S \in \Fgood$ with $v \in S$, and $\gamma := \slack(S) \in [k]$ satisfies $\dout(S) = k-\gamma$.
    Moreover, $\dout(v)\ge k$, so the algorithm runs $\localec$ with $\rho = \min\{2^i \dout(v), \rhomax\}$. 
    By definition of $q_k$, we have $\vol(S) \le 2^i \dout(v)\cdot\gamma$ and also $S\notin\Fbig$, so $\vol(S) \le \rhomax \cdot \gamma$, meaning $\vol(S)\le \rho\gamma$.
    By \Cref{lem:localec}, \localec\ returns a set, and the algorithm rejects, with probability at least $3/4$.
    So in Case~2, the algorithm rejects with probability at least $\frac{19}{20}\cdot\frac{3}{4} > \frac{71}{100}$.

    \emph{Query complexity and running time.}
    Step~\ref{step:avg-deg} makes no queries.
    If Step~\ref{step:small} applies, then $\eps m < 520k$, and since $\davg\ge k$, we have $n \le m/k \le m$. So reading all adjacency lists takes $n + m = O(k/\eps)$ queries.
    Moreover, $G$ is $k$-edge-connected if and only if, for a fixed vertex $r$ and every other vertex $u$, there are $k$ edge-disjoint paths from $r$ to $u$ and from $u$ to $r$. This can be checked with at most $k$ augmenting path computations for each of the $2(n-1)$ pairs, in time $O(nk(n+m)) = O(m^2) = O(k^2/\eps^2)$, using $n\le m/k$.
    Otherwise, Step~\ref{step:low-sampling} makes $O(s_0) = O\big(\frac{k}{\eps\davg}\big) = O\big(\frac{1}{\eps}\big)$ queries, since $\davg\ge k$.
    By \Cref{lem:localec}, for a vertex $v$ sampled in iteration $i$, \localec\ makes $O(k \cdot 2^i \cdot d^{\mathrm{out}}_H(v))$ queries and runs in $O(k^2\cdot 2^i\cdot d^{\mathrm{out}}_H(v))$ time.
    A uniformly random vertex has expected out-degree $\davg$, both in $G$ and in $G^R$.
    Since $s_i \le \frac{8\ln 20}{2^i\eps\davg}+1$, $2^t \le \frac{16}{\eps\davg}$, and $\eps\davg\le\frac12$, the expected query complexity of Steps~\ref{step:for-loop-dir-k-conn}--\ref{step:dir-FNSYY} is at most
    \[
        \sum_{i=1}^{t} s_i\cdot O\left(k\cdot 2^i \cdot \davg\right) = \sum_{i=1}^{t} O\left(\frac{k}{\eps} + k\,2^i\davg\right) = O\left(\frac{k t}{\eps} + k\,2^{t}\davg\right) = O\left( \frac{k}{\eps} \log\frac{1}{\eps \davg} \right).
    \]
    By a similar calculation, the expected running time is $O \left(\frac{k^2}{\eps} \log\frac{1}{\eps \davg}\right)$.
    
    The tester claimed in \Cref{thm:dir-k-connectedness} is obtained by running Algorithm~\ref{alg:dir-k-connectedness}, halting and accepting if the number of queries or the running time exceeds 50 times the corresponding bound on its expectation established above. This modification preserves one-sided error.
    By Markov's inequality and a union bound, the algorithm is halted with probability at most $\frac{1}{25}$.
    In both cases above, the unmodified algorithm accepts an $\eps$-far graph with probability at most $\frac{29}{100}$, so the modified algorithm accepts it with probability at most $\frac{1}{25} + \frac{29}{100} < \frac{1}{3}$.
    This completes the proof of \Cref{thm:dir-k-connectedness}. 
\end{proof}

\subsection{Undirected $k$-Connectedness Tester}

We first show how to reduce undirected $k$-connectedness testing to directed $k$-connectedness testing, and then use \Cref{thm:dir-k-connectedness} to derive a tester for undirected graphs.

\begin{definition}[Reduction]
    Given an undirected graph $G=(V,E)$, define its directed version $\Gdir$ by replacing each edge $\{u,v\} \in E$ with two directed edges $(u,v)$ and $(v,u)$.
\end{definition}

Note that, for every $S\subseteq V$, the edges of $\Gdir$ from $S$ to $V\setminus S$ are in one-to-one correspondence with the edges of $G$ between $S$ and $V\setminus S$.

\begin{lemma}[Reduction completeness]\label{lem:reduction-completeness}
    If an undirected graph $G$ is $k$-connected, then its directed version $\Gdir$ is also $k$-connected.
\end{lemma}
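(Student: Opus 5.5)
The plan is to verify the cut characterization of directed $k$-edge-connectedness from \Cref{sec:preliminaries} directly for $\Gdir$. A directed graph is $k$-edge-connected if and only if $\dout(S)\ge k$ for every $S$ with $\emptyset\neq S\subsetneq V$. So it suffices to fix an arbitrary such $S$ and show that $\Gdir$ has at least $k$ edges from $S$ to $V\setminus S$.

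First, I will use the observation stated right after the definition of the reduction. For every $S\subseteq V$, the edges of $\Gdir$ from $S$ to $V\setminus S$ correspond one-to-one with the undirected edges of $G$ between $S$ and $V\setminus S$. Concretely, an undirected edge $\{u,v\}$ with $u\in S$ and $v\notin S$ yields exactly one edge of $\Gdir$ leaving $S$, namely $(u,v)$; its twin $(v,u)$ enters $S$. An edge of $G$ with both endpoints on the same side yields no edge of $\Gdir$ leaving $S$. Hence $\dout_{\Gdir}(S)$ equals the number of edges of $G$ crossing the cut $(S,V\setminus S)$.

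Second, since $G$ is $k$-edge-connected, the undirected characterization says that at least $k$ edges of $G$ cross every cut $(S,V\setminus S)$ with $\emptyset\neq S\subsetneq V$. Combining the two steps gives $\dout_{\Gdir}(S)\ge k$ for every such $S$, so $\Gdir$ is $k$-edge-connected.

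There is no real obstacle here. The only point needing care is the bijection between crossing undirected edges and outgoing directed edges. Because $G$ is simple, $\Gdir$ is also simple, so no parallel edges or self-loops can distort the count.
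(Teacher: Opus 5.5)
Your proof is correct and follows the paper's argument: both rest on the bijection between edges of $\Gdir$ leaving $S$ and edges of $G$ crossing $(S,V\setminus S)$. The paper argues by contrapositive (a set $S$ with fewer than $k$ out-edges in $\Gdir$ gives fewer than $k$ crossing edges in $G$) where you argue directly, and your remark about simplicity is harmless but not needed for the count.
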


\begin{proof}
    We prove the contrapositive.
    If $\Gdir$ is not $k$-connected, then there is a set $S$ with $\emptyset\neq S\subsetneq V$ such that fewer than $k$ edges of $\Gdir$ go from $S$ to $V\setminus S$.
    By the correspondence above, fewer than $k$ edges of $G$ cross the cut $(S, V\setminus S)$, and removing them disconnects $G$. So $G$ is not $k$-connected.
\end{proof}

\begin{lemma}[Reduction soundness]\label{lem:reduction-soundness}
    If an undirected graph $G$ is $\eps$-far from being $k$-connected, then its directed version $\Gdir$ is $\eps/2$-far from being $k$-connected.
\end{lemma}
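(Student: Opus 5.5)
We prove the contrapositive: if $\Gdir$ is $\eps/2$-close to $k$-connected, then $G$ is $\eps$-close. Let $m=|E|$, so $\Gdir$ has $2m$ edges. If $\Gdir$ is not $\eps/2$-far, then fewer than $\frac{\eps}{2}\cdot 2m=\eps m$ directed edge modifications turn $\Gdir$ into a $k$-connected directed graph.

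The plan is to work with \Cref{fact:far-characterization} rather than directly symmetrizing a directed modification. Symmetrizing is lossy: a deleted directed edge may have its reverse kept, and an added arc may lack its reverse. Suppose $G$ is $\eps$-far from undirected $k$-connectedness. We invoke the undirected analogue of the characterization, which is the statement from~\cite{OR11} that \Cref{fact:far-characterization} specializes. It gives disjoint nonempty proper subsets $S_1,\dots,S_t$ with $\sum_i (k-d_G(S_i)) > \eps m$, where $d_G(S)$ is the number of undirected edges crossing $(S,V\setminus S)$. By the one-to-one correspondence noted before \Cref{lem:reduction-completeness}, $\dout_{\Gdir}(S_i)=d_G(S_i)$. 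Hence $\sum_i(k-\dout_{\Gdir}(S_i))>\eps m=\frac{\eps}{2}\cdot|E(\Gdir)|$. Applying the "if" direction of \Cref{fact:far-characterization} to $\Gdir$, which has $2m$ edges, shows that $\Gdir$ is $\eps/2$-far.

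The main obstacle is justifying the undirected characterization, since the paper only quotes the directed version. If we may not cite~\cite{OR11} for the undirected case, we prove the needed direction ourselves. We need that if $G$ can be made $k$-connected with $r$ modifications, then every disjoint family satisfies $\sum_i \max(0,k-d_G(S_i))\le 2r$ or a comparable bound. Each added edge raises $d(S_i)$ for at most two sets of the family, and deletions never help, so $\sum_i(k-d_G(S_i))\le 2r$. This loses a factor of $2$.

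To avoid that loss, we use the known lower bound on the number of undirected edges needed to augment to $k$-connectivity, namely $\lceil \frac12\sum_i (k-d(S_i))\rceil$, from Frank/Naor et al. We then match it against the directed bound $\sum_i(k-\dout(S_i))$. The directed count is twice the undirected one, and the directed graph has twice the edges, so the ratios cancel. Concretely, we aim for the following chain. If $G$ is $\eps$-far, then the optimal undirected augmentation needs at least $\eps m$ edges. By the undirected Frank-type formula, some subpartition therefore has deficiency sum at least $2\eps m$ (up to the ceiling), and in $\Gdir$ this gives $\sum(k-\dout)\ge 2\eps m \ge \frac{\eps}{2}\cdot 2m$. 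Even a factor-2 loss in this step would still yield $\eps/2$-farness, since we only need $\sum > \frac{\eps}{2}\cdot 2m=\eps m$. This slack is what makes the constant $\eps/2$ work.
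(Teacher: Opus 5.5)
Your chain is fine once you have, for the undirected graph $G$, a family of disjoint sets with $\sum_i (k-d_G(S_i))>\eps m$. The passage to $\Gdir$ and the use of the easy (``if'') direction of \Cref{fact:far-characterization} are correct. The gap is that this undirected statement is never established.

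The paper states only the directed version, and you cannot derive the undirected one from it. Applying the directed fact to $\Gdir$ would already require knowing that $\Gdir$ is far, which is the lemma itself. Your self-contained justification proves the converse of what you need. The statement ``if $r$ modifications suffice, then every family has $\sum_i\max(0,k-d_G(S_i))\le 2r$'' is equivalent to ``a family with deficiency sum above $2r$ forces more than $r$ modifications.'' That is family $\Rightarrow$ far, which cannot produce a family from the farness of $G$.

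Likewise, $\lceil\frac12\sum_i(k-d(S_i))\rceil$ being a lower bound is the wrong half. You need that $\lceil\frac12\max\sum_i(k-d(S_i))\rceil$ new edges always suffice. That is the nontrivial direction of the Watanabe--Nakamura/Frank augmentation theorem, proved via splitting off. Even citing that theorem needs care. It is stated for $k\ge 2$, and it allows parallel new edges, while here graphs are simple. So ``$\eps$-far'' does not convert directly into a deficient subpartition. For instance, $K_k$ can never be made $k$-edge-connected on its vertex set, yet its largest deficiency sum is only $k$, which is below $\eps m=\eps\binom{k}{2}$ once $\eps>\frac{2}{k-1}$.

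None of this is needed. The lossiness you cite as the reason to avoid symmetrizing disappears once you ignore deletions. Take a $k$-connected $\overrightarrow{H}$ obtained from $\Gdir$ with $M$ modifications, and let $A$ be its arcs not in $\Gdir$. Then $\Gdir\cup A\supseteq\overrightarrow{H}$ is still $k$-connected, so deleted arcs play no role.

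Now insert $\{u,v\}$ into $G$ for every $(u,v)\in A$. This costs at most $|A|\le M$ insertions and keeps the graph simple. For every cut, the arcs of $\Gdir\cup A$ leaving $S$ map injectively to undirected edges crossing $(S,V\setminus S)$, because an arc and its reverse cannot both leave $S$. Hence the new graph is $k$-connected, and farness of $G$ gives $M\ge|A|\ge\eps|E|=\frac{\eps}{2}|E(\Gdir)|$. An added arc without its reverse costs nothing extra, since one undirected edge serves both directions.

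This is the paper's proof. It is elementary and does not depend on $k\ge 2$ or on multigraph augmentation theorems.
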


\begin{proof}
    Let $\overrightarrow{H}$ be a $k$-connected directed graph that is obtained from $\Gdir$ with $M$ edge modifications, and let $A$ be the set of edges of $\overrightarrow{H}$ that are not edges of $\Gdir$, so that $|A|\le M$.
    The graph obtained by adding $A$ to $\Gdir$ contains $\overrightarrow{H}$, so it is also $k$-connected.
    Let $G'$ be the graph obtained from $G$ by inserting the edge $\{u,v\}$ for every pair $\{u,v\}$ such that $(u,v)\in A$ or $(v,u)\in A$; this inserts at most $|A|$ edges, none of which belongs to $G$.
    For every set $S$ with $\emptyset\neq S\subsetneq V$, distinct edges of $\Gdir$ or of $A$ going from $S$ to $V\setminus S$ correspond to distinct edges of $G'$ crossing the cut $(S,V\setminus S)$. Hence at least $k$ edges of $G'$ cross this cut, and $G'$ is $k$-connected.
    Since $G$ is $\eps$-far from $k$-connected, $M\ge |A| \ge \eps|E| = \frac{\eps}{2}\cdot 2|E|$. So $\Gdir$ is $\eps/2$-far from being $k$-connected.
\end{proof}

We are now ready to state our final tester for $k$-connectedness of undirected graphs.

\begin{theorem}\label{thm:k-connectedness}
    There exists a one-sided error $\eps$-tester for $k$-edge-connectedness of simple undirected graphs with average degree $\davg$ that makes $O\left(\frac{k}{\eps} \log \frac{1}{\eps \davg} \right)$ queries.
    Its running time is $O\left(\frac{k^2}{\eps} \log \frac{1}{\eps \davg} \right)$ if the input graph has at least $\frac{520k}{\eps}$ edges, and $O(k^2/\eps^2)$ otherwise.
\end{theorem}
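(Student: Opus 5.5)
We run Algorithm~\ref{alg:dir-k-connectedness} (the tester from \Cref{thm:dir-k-connectedness}) with proximity parameter $\eps/2$ on $\Gdir$, and simulate every query to $\Gdir$ by a single query to $G$. The simulation is direct. The out-degree and in-degree of $v$ in $\Gdir$ both equal $\deg_G(v)$. The $i$-th out-neighbor and the $i$-th in-neighbor of $v$ in $\Gdir$ are both the $i$-th neighbor of $v$ in $G$. So each query of the directed tester costs exactly one query to $G$, and $\Gdir$ is simple whenever $G$ is. The parameters passed to the directed tester are $n$ and the average out-degree of $\Gdir$, which is $2|E|/n = \davg$, so it receives the correct $\davg$. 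Its number of edges is $m = 2|E| = n\davg$.

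\emph{Correctness.} If $G$ is $k$-connected, then $\Gdir$ is $k$-connected by \Cref{lem:reduction-completeness}. The directed tester has one-sided error, so it always accepts. If $G$ is $\eps$-far from $k$-connected, then $\Gdir$ is $\eps/2$-far by \Cref{lem:reduction-soundness}. Hence the directed $(\eps/2)$-tester rejects with probability at least $2/3$. Both cases use only the guarantees of \Cref{thm:dir-k-connectedness}, which applies because $\Gdir$ is a simple directed graph with known $n$ and $\davg$.

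\emph{Complexity.} Substituting $\eps/2$ for $\eps$ in \Cref{thm:dir-k-connectedness} gives $O\big(\frac{k}{\eps}\log\frac{2}{\eps\davg}\big)$ queries. In the range $\eps\davg\le\frac12$ we have $\log\frac{2}{\eps\davg}\le 2\log\frac{1}{\eps\davg}$, so this is $O\big(\frac{k}{\eps}\log\frac{1}{\eps\davg}\big)$, with the same footnoted convention for larger $\eps$.

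The only point needing care is the threshold in the running-time bound. The directed tester achieves the fast running time when $\Gdir$ has $2|E|\ge \frac{520k}{\eps/2} = \frac{1040k}{\eps}$ edges, which is equivalent to $|E|\ge\frac{520k}{\eps}$. This is exactly the threshold in the statement. In that case the running time is $O\big(\frac{k^2}{\eps}\log\frac{1}{\eps\davg}\big)$, and otherwise it is $O(k^2/\eps^2)$. The simulation adds only constant overhead per query. I expect no real obstacle beyond checking these constants and the simulation of in-adjacency queries.
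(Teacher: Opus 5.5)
Your proposal is correct and matches the paper's proof. Like the paper, you run the directed tester on $\Gdir$ with parameter $\eps/2$, answer each $\Gdir$ query with one query to $G$, and invoke the reduction completeness and soundness lemmas. Your check that the threshold $\frac{520k}{\eps/2}$ on the edges of $\Gdir$ equals $|E|\ge\frac{520k}{\eps}$, and your absorption of $\log\frac{2}{\eps\davg}$ into $O(\log\frac{1}{\eps\davg})$, fill in details the paper leaves implicit.
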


\begin{proof}
    Given an undirected graph $G$, run the tester from \Cref{thm:dir-k-connectedness} on $\Gdir$ with parameter $\eps/2$.
    Each query to $\Gdir$ (to an incoming or outgoing adjacency list or degree) can be answered with one query to $G$, and $\Gdir$ has the same average degree as $G$ and twice as many edges, so the query and time complexities follow from \Cref{thm:dir-k-connectedness}.
    If $G$ is $k$-connected, then $\Gdir$ is $k$-connected by \Cref{lem:reduction-completeness}, so the tester always accepts.
    If $G$ is $\eps$-far from $k$-connected, then, by \Cref{lem:reduction-soundness}, $\Gdir$ is $\eps/2$-far from $k$-connected, so the tester rejects with probability at least $2/3$.
\end{proof}

\section{One-sided error lower bound for $k$-connectedness testing}\label{sec:one-sided-lb}

We prove the following lower bound for one-sided error testers of $k$-edge-connectedness.

\begin{theorem}\label{thm:one-sided-lb}
    Let $k\ge 1$ and $d\ge k+3$ be integers, let $\eps\in\left(0,\frac{1}{32d^2}\right]$, and let $n\ge\frac{k}{\eps d}$ be a multiple of $d$.
    Every one-sided error $\eps$-tester for $k$-edge-connectedness of $n$-vertex undirected graphs with average degree $\davg = d-1$, in the adjacency list model with degree and neighbor queries, makes at least $\frac{k}{32\eps}$ queries on some input.
\end{theorem}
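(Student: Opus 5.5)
We work with a single hard input $G$ that is $\eps$-far from $k$-edge-connectedness. Because the tester has one-sided error, it may reject only when the answers it has seen are inconsistent with every $k$-edge-connected simple graph with the same $n$ and $\davg$. The plan is to show that a tester making $q<\frac{k}{32\eps}$ queries always sees a partial view that extends to a $k$-connected graph, so it must accept $G$ with probability $1$. This gives the theorem directly, with no need for Yao's principle.

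\emph{Construction.} Following the technical overview, we take a single component size and modify the supernode so that every component is itself $k$-connected. The building block is a $d$-vertex gadget, namely a $d$-clique with roughly $d/2$ edges deleted (for example a near-perfect matching), so that the average degree comes out to exactly $\davg=d-1$. Since $d\ge k+3$, each gadget is $k$-edge-connected. Each gadget also has a reserved set of $k$ vertices with spare degree. We choose the component size so that $G$ has $t=\ceil{\eps n(d-1)\cdot\frac{2}{k}}$ components, or a comparable number. Merging components by adding edges needs at least $k$ new crossing edges for each of roughly $t/2$ components, so $G$ is $\eps$-far. The condition $n\ge\frac{k}{\eps d}$ guarantees $t\ge 1$, and $\eps\le\frac{1}{32d^2}$ keeps each component of size at least $d$. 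Every adjacency list is given in a fixed order, and the tester is deterministic once we fix its random coins.

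\emph{Rewiring argument.} Fix the queries and answers, at most $q$ of them. Call a component \emph{untouched} if none of its vertices appears in the view, and call an edge \emph{unseen} if it was not returned by any neighbor query from either endpoint. The tester sees at most $q$ edges, and there are $t$ components with about $\frac{(d-1)n}{2t}$ edges each. Therefore at least a constant fraction of components, more precisely all but about $q\cdot\frac{2t}{\eps n d}\cdot(\text{small})$ of them, contain $k$ unseen edges in pairwise distinct positions. Using $q<\frac{k}{32\eps}$, a counting argument shows that the components can be linked in a cyclic order, where the link from component $C_i$ to $C_{i+1}$ uses $k$ unseen edges $\{a_j,b_j\}\subseteq C_i$ and $k$ unseen edges $\{a'_j,b'_j\}\subseteq C_{i+1}$. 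We then perform a double swap, replacing each such pair of edges by $\{a_j,a'_j\}$ and $\{b_j,b'_j\}$. The swaps are placed in the adjacency-list slots the unseen edges occupied. As a result every degree and every answered neighbor query is unchanged, the graph stays simple, and $n$ and $\davg$ are preserved.

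\emph{Verification, and the main obstacle.} Removing $k$ edges from a gadget leaves it $k$-connected provided the removed edges are spread out, for instance a matching of size $k$ avoiding a spanning $(k+1)$-connected core. This is where $d\ge k+3$ is used. The cyclic links then supply $2k$ edges between consecutive components, so every cut is crossed by at least $k$ edges. The main obstacle is guaranteeing that enough components have such edges, unseen and forming a matching, after an \emph{adaptive} tester has spent $q$ queries. The tester may concentrate its queries, and it may learn parts of components through edges it has seen. I would handle this by charging: each query kills at most one edge, so a component loses $k$-swappable edges only if it absorbs about $\frac{d^2}{\cdot}$ or more queries, which we can show is $\Omega(kd)$ queries per component. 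Hence at most $\frac{q}{kd}$ components are spoiled. Components the tester has fully explored can then be linked through the spare edges of other components, where we take $2k$ edges from the unspoiled ones. The constant $32$ leaves room for these losses. If a pure charging argument is too weak, I would fall back on using a single component as a "hub" that all the others link to through untouched components.
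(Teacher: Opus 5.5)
Your strategy matches the paper's. There is a single $\eps$-far input whose components are $k$-connected graphs built from $d$-cliques minus a matching. A double swap of unseen edges along a cyclic order of the components preserves all answers and yields a $k$-connected graph, which a one-sided tester must accept. But the two steps that carry the real content are missing or wrong.

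\textbf{First gap: the component structure.} You never say how gadgets are assembled into a component. So neither $\davg=d-1$ nor the $k$-connectivity of a component after the swaps is established. Your verification also fails as stated. Each component gives up $2k$ edges, to two links. A matching of size $k$ inside one $d$-vertex gadget does not exist when $d=k+3$ and $k\ge 4$, and removals that are not spread out can destroy $k$-connectivity.

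The paper arranges supernodes $T_{d,k}$ (a $d$-clique minus a matching of size $h=\lceil k/2\rceil$) in a cycle, with $h$ edges between consecutive supernodes. This makes each component $(d-1)$-regular. It then takes one unseen \emph{internal} edge from each of $2h$ \emph{distinct} supernodes. \Cref{clm:g_t-is-k-conn} shows that deleting at most one internal edge per supernode preserves $k$-connectivity:
a cut splitting a supernode is crossed by at least $d-3\ge k$ or $2(d-2)-h-1\ge k$ of its remaining internal edges;
a cut whose sides are unions of supernodes is crossed by at least $2h\ge k$ inter-supernode edges, which are never removed.

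\textbf{Second gap: the counting.} You allow about $q/(kd)$ spoiled components and propose attaching them through spare edges of other components or through a hub. This cannot work in general. If the tester has queried the degrees and all adjacency entries of the vertices of a component $C$, then $C$ has no outgoing edge in any consistent graph, and the tester may legitimately reject. What must be shown is that no component is spoiled, even if the tester spends all its queries on it.

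The paper does this directly. Exhausting a supernode requires seeing all of its at least $d^2/4$ internal edges, so fewer than $\frac{k}{8\eps d^2}$ supernodes are exhausted in total. Every component has $N_0>\frac{k}{4\eps d^2}-1$ supernodes, so every component keeps more than $4k-1$, hence at least $2h$, non-exhausted supernodes. Your own component sizes (about $k/(4\eps)$ edges each, versus $q<k/(32\eps)$) would support the same per-component bound. So the $\Omega(kd)$ charging estimate and the fallback are unnecessary, and as written they leave the proof incomplete.

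Using $k$ swaps per link rather than $h$ is harmless. Each swap already creates two crossing edges, and any cut between unions of components crosses at least two links.
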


This is consistent with the $\tilde O(\min\{k/\eps, 1/\eps^2\})$ upper bound of~\cite{FNSYY20} for simple graphs in the bounded-degree model, since our assumptions imply $\eps\le\frac{1}{32(k+3)^2}$, and therefore $\frac{k}{\eps} \le \frac{1}{\eps^2}$.

\begin{proof}[Proof of \Cref{thm:one-sided-lb}]
    Let $h = \ceil{k/2}$, and note that $2h\le k+1 < d$.

    \paragraph{Graph primitives.}
    We use the following graph primitives for our construction.
    
    \begin{definition}[$T_{d, k}$]
        The graph $T_{d, k}$ has $d$ vertices labeled $1, \dots, d$. It is obtained from the $d$-clique on these vertices by removing the $h$ edges of the matching $\{1, 2\}, \{3, 4\}, \dots, \{2h-1, 2h\}$.
    \end{definition}
    
    \begin{definition}[$G_T (N', d, k)$]
        For an integer $N'\ge 2$, the graph $G_T (N', d, k)$ is a cycle consisting of $N'$ supernodes, indexed by $\mathbb{Z}_{N'}$, where each supernode is a copy of $T_{d, k}$.
        For every $i\in\mathbb{Z}_{N'}$ and $\ell\in[h]$, vertex $2\ell-1$ of supernode $i$ is joined by an edge to vertex $2\ell$ of supernode $i+1$.
        The edges of the copies of $T_{d,k}$ are called \emph{internal} edges.
    \end{definition}

    In $T_{d,k}$, the vertices $1,\ldots,2h$ have degree $d-2$, and all other vertices have degree $d-1$. Each of the vertices $1,\ldots,2h$ of a supernode is incident to exactly one edge between supernodes, so $G_T(N',d,k)$ is a simple $(d-1)$-regular graph with $N'd$ vertices.

    \begin{figure}[h]
        \centering
        \tikzset{
            intv/.style={circle, draw, fill=white, inner sep=1pt},
            matchv/.style={circle, draw=green!40!black, fill=green!50, inner sep=1pt},
            cliquee/.style={draw=black!40},
            removede/.style={draw=red!85!black, densely dashed, line width=0.9pt},
            unseene/.style={draw=red!85!black, line width=0.9pt},
            betweene/.style={draw=blue!80!black, line width=1.2pt}
        }
        \begin{subfigure}{0.4\textwidth}
            \centering
            \begin{tikzpicture}[scale=2.3]
                \def\R{0.6}
                \node[matchv] (t1) at (30:\R) {$1$};
                \node[matchv] (t2) at (150:\R) {$2$};
                \node[matchv] (t3) at (330:\R) {$3$};
                \node[matchv] (t4) at (210:\R) {$4$};
                \node[intv] (t5) at (90:\R) {$5$};
                \node[intv] (t6) at (270:\R) {$6$};
                \foreach \p/\q in {1/3,1/4,1/5,1/6,2/3,2/4,2/5,2/6,3/5,3/6,4/5,4/6,5/6} \draw[cliquee] (t\p) -- (t\q);
                \draw[removede] (t1) -- (t2);
                \draw[removede] (t3) -- (t4);
                \draw[betweene, dashed] (t1) -- ++(0.45,0);
                \draw[betweene, dashed] (t3) -- ++(0.45,0);
                \draw[betweene, dashed] (t2) -- ++(-0.45,0);
                \draw[betweene, dashed] (t4) -- ++(-0.45,0);
                \node[font=\footnotesize] at (1.15,0) {$i{+}1$};
                \node[font=\footnotesize] at (-1.15,0) {$i{-}1$};
            \end{tikzpicture}
            \caption{The supernode $T_{6,3}$.}
            \label{fig:tdk}
        \end{subfigure}
        \hfill
        \begin{subfigure}{0.56\textwidth}
            \centering
            \begin{tikzpicture}[scale=1.6]
                \def\n{8}
                \def\R{0.42}
                \def\BIG{2.0}
                \foreach \i in {1,...,\n} {
                    \pgfmathsetmacro{\ang}{360/\n*(\i-1)}
                    \begin{scope}[shift={({\ang}:\BIG)}, rotate=\ang+90]
                        \node[matchv, inner sep=1.4pt] (g\i_1) at (30:\R) {};
                        \node[matchv, inner sep=1.4pt] (g\i_2) at (150:\R) {};
                        \node[matchv, inner sep=1.4pt] (g\i_3) at (330:\R) {};
                        \node[matchv, inner sep=1.4pt] (g\i_4) at (210:\R) {};
                        \node[intv, inner sep=1.4pt] (g\i_5) at (90:\R) {};
                        \node[intv, inner sep=1.4pt] (g\i_6) at (270:\R) {};
                        \foreach \p/\q in {1/3,1/4,1/5,1/6,2/3,2/4,2/5,2/6,3/5,3/6,4/5,4/6,5/6} \draw[cliquee] (g\i_\p) -- (g\i_\q);
                    \end{scope}
                }
                \foreach \i in {1,...,\n} {
                    \pgfmathtruncatemacro{\next}{mod(\i,\n)+1}
                    \draw[betweene] (g\i_1) -- (g\next_2);
                    \draw[betweene] (g\i_3) -- (g\next_4);
                }
            \end{tikzpicture}
            \caption{The graph $G_T(8,6,3)$, with $8$ supernodes.}
            \label{fig:gt}
        \end{subfigure}
        \caption{The graph primitives used in the proof of \Cref{thm:one-sided-lb}, for $d=6$ and $k=3$, so that $h=2$.
        The vertices $1,\ldots,2h$ of each supernode (green) are incident to the edges between supernodes (blue); the other vertices are white, the edges of the $d$-cliques are gray, and the removed matching edges are dashed red.
        In (a), the dashed blue stubs indicate the edges to the neighboring supernodes: the odd vertices $1$ and $3$ are adjacent to the next supernode $i+1$, and the even vertices $2$ and $4$ to the previous supernode $i-1$.}
        \label{fig:one-sided}
    \end{figure}

    \begin{claim}\label{clm:g_t-is-k-conn}
        The graph $G_T(N', d, k)$ is $k$-connected.
        Moreover, a graph $G'$ obtained by removing at most one internal edge from each supernode of $G_T(N', d, k)$ is also $k$-connected.
    \end{claim}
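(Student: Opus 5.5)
It suffices to prove the second statement, since $G_T(N',d,k)$ itself is the case where no edge is removed. We argue directly via cuts. Fix a graph $G'$ obtained by deleting at most one internal edge from each supernode, and a set $S$ with $\emptyset\neq S\subsetneq V$; we show that at least $k$ edges of $G'$ cross $(S,V\setminus S)$. We split into two cases according to whether $S$ splits some supernode.

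\emph{Case 1: some supernode $X$ has $A=X\cap S$ and $B=X\setminus S$ both nonempty.}
With $a=|A|$, $b=|B|$ and $a+b=d$, the clique $K_d$ on $X$ has $ab\ge d-1$ edges between $A$ and $B$. Passing from $K_d$ to $T_{d,k}$ removes a matching, and then at most one further internal edge is removed. A matching meets the $A$--$B$ cut in at most $\min(a,b,h)$ edges. If $\min(a,b)=1$, the cut inside $X$ retains at least $d-1-1-1=d-3\ge k$ edges, using $d\ge k+3$. If $\min(a,b)\ge 2$, then $ab\ge 2(d-2)$, so at least $2d-4-h-1\ge 2d-5-(k+1)/2$ edges remain, which is at least $k$ because $d\ge k+3$. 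In either case at least $k$ edges of $G'$ cross the cut inside $X$ alone. We should double-check the slightly delicate count when $\min(a,b)=1$: the single vertex loses at most one matching edge and at most one extra edge, so its degree toward the other side is at least $d-3\ge k$.

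\emph{Case 2: every supernode lies entirely in $S$ or entirely in $V\setminus S$.}
Then $S$ is a nonempty proper union of supernodes on the cycle $\mathbb{Z}_{N'}$. Walking around the cycle, there are at least two indices $i$ at which supernodes $i$ and $i+1$ lie on opposite sides. Between consecutive supernodes there are exactly $h$ edges, and none of them is internal, so none was removed. Hence at least $2h\ge k$ edges cross the cut.

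Combining the two cases, every cut of $G'$ has at least $k$ crossing edges, so $G'$ is $k$-connected. The main point requiring care is the arithmetic in Case 1: we need the hypothesis $d\ge k+3$ so that deleting both the matching edge and the extra internal edge still leaves $k$ edges crossing. The rest is routine.
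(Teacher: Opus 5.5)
Your proof is correct and follows the paper's argument essentially step for step. It uses the same two cases: for a cut splitting a supernode, you count $a(d-a)$ crossing clique edges minus at most $\min\{a,d-a,h\}+1$ removed ones, with subcases $\min(a,d-a)=1$ and $\min(a,d-a)\ge 2$; for a cut that is a union of supernodes, the cycle is crossed at two places with $h$ edges each. Your explicit remark that the edges between supernodes are not internal, and hence are never removed, is a small clarification the paper leaves implicit.
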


    \begin{proof}
        Consider a cut $(A,B)$ of $G'$ that separates two vertices of some supernode $X$, and let $a = |X\cap A|$, where $1\le a\le d-1$.
        Exactly $a(d-a)$ edges of the $d$-clique on $X$ cross the cut. At most $\min\{a, d-a, h\}$ of them belong to the removed matching, and at most one more was removed from $X$ to obtain $G'$.
        If $a\in\{1,d-1\}$, then at least $(d-1)-1-1 = d-3 \ge k$ internal edges of $X$ cross the cut.
        Otherwise, $a(d-a)\ge 2(d-2)$, and at least $2(d-2)-h-1 \ge 2(k+1)-\frac{k+1}{2}-1 \geq k$ internal edges of $X$ cross the cut.
        
        Now consider a cut $(A,B)$ that separates two supernodes, but no two vertices of the same supernode. Then $A$ and $B$ are unions of supernodes. Since the supernodes are arranged in a cycle, there are at least two indices $i\in\mathbb{Z}_{N'}$ such that supernodes $i$ and $i+1$ are on different sides of the cut. For each such $i$, the $h$ edges between supernodes $i$ and $i+1$ cross the cut. So at least $2h \ge k$ edges cross the cut.
        Thus $G'$ and subsequently $G_T(N',d,k)$ is $k$-connected.
    \end{proof}

    \paragraph{Hard instance.}
    Let $N_0 = \floor{\frac{k}{4\eps d^2}}$ and $r = \ceil{\frac{\eps n d}{k}}$.
    Let $G$ be the graph consisting of $r$ connected components isomorphic to $G_T(N_0,d,k)$, called \emph{small components}, and one connected component isomorphic to $G_T(N_1,d,k)$, called the \emph{large component}, where $N_1 = \frac{n}{d} - rN_0$. 
    Thus $G$ is a simple $(d-1)$-regular graph with $n$ vertices and $\frac{n(d-1)}{2}$ edges, and by \Cref{clm:g_t-is-k-conn}, each of its $r+1$ components is $k$-connected.
    The graph $G$ itself is $\eps$-far from being $k$-connected, because it has $r+1 \ge \frac{\eps n d}{k}$ connected components.

    \paragraph{Query lower bound.}
    Let $\calA$ be a one-sided error $\eps$-tester for $k$-connectedness, and fix its random coins.
    Suppose that, on input $G$, $\calA$ makes fewer than $\frac{k}{32\eps}$ queries.
    We say that an edge $\{u,w\}$ is \emph{seen} if $\calA$ queried the position of $w$ in the adjacency list of $u$ or the position of $u$ in the adjacency list of $w$. 
    We will construct a $k$-connected $n$-vertex graph $G^*$ with the same degrees as $G$, on which every query of $\calA$ receives the same answer as on $G$.
    Then $\calA$ (with these coins) behaves identically on $G$ and on $G^*$, and since it must accept $G^*$, it accepts $G$.
    If this were the case for every choice of coins, then $\calA$ would accept $G$ with probability $1$, contradicting the fact that $G$ is $\eps$-far from $k$-connected.

    A supernode is \emph{exhausted} if all of its internal edges are seen.
    Each supernode has $\binom{d}{2}-h \ge \frac{d(d-1)}{2}-\frac{d-2}{2} \geq \frac{d^2}{4}$ internal edges, since $h \le \frac{k+1}{2}\le\frac{d-2}{2}$.
    So fewer than $\frac{4}{d^2}\cdot\frac{k}{32\eps} = \frac{k}{8\eps d^2}$ supernodes are exhausted.
    Since every component has at least $N_0 > \frac{k}{4\eps d^2}-1$ supernodes, there are more than
    \[
        \frac{k}{4\eps d^2} - 1 - \frac{k}{8\eps d^2} \;=\; \frac{k}{8\eps d^2}-1 \;\ge\; 4k-1 \;\ge\; 2h - 1
    \]
    non-exhausted supernodes in every component, that is, at least $2h$ of them.

    \emph{Constructing $G^*$.}
    Order the components of $G$ cyclically as $C_0, C_1, \ldots, C_r$, with indices modulo $r+1$.
    In every component $C_j$, choose $2h$ non-exhausted supernodes and, in each of them, one unseen internal edge.
    Assign $h$ of the chosen edges of $C_j$ to the pair $(C_{j-1},C_j)$ and the other $h$ edges to the pair $(C_j, C_{j+1})$.
    For every $j$ and every $\ell\in[h]$, take the $\ell$-th unseen edge $\{a,b\}$ of $C_j$ assigned to the pair $(C_j,C_{j+1})$ and the $\ell$-th unseen edge $\{a',b'\}$ of $C_{j+1}$ assigned to the same pair, and replace them by the edges $\{a,a'\}$ and $\{b,b'\}$. 
    Let $G^*$ be the resulting graph.

    \begin{figure}[h]
        \centering
        \tikzset{
            intv/.style={circle, draw, fill=white, inner sep=1pt},
            removede/.style={draw=red!85!black, densely dashed, line width=0.9pt},
            unseene/.style={draw=red!85!black, line width=1.1pt},
            betweene/.style={draw=blue!80!black, line width=1.2pt}
        }
        \begin{subfigure}{0.56\textwidth}
            \centering
            \begin{tikzpicture}[scale=1.1]
                \begin{scope}
                    \draw[rounded corners, fill=black!6, draw=black!30] (-1.6,-0.75) rectangle (-0.4,0.75);
                    \draw[rounded corners, fill=black!6, draw=black!30] (0.4,-0.75) rectangle (1.6,0.75);
                    \node[font=\footnotesize] at (-1.0,1.0) {$X\subseteq C_j$};
                    \node[font=\footnotesize] at (1.0,1.0) {$X'\subseteq C_{j+1}$};
                    \node[intv, label=left:{\footnotesize $a$}] (xa) at (-0.9,0.4) {};
                    \node[intv, label=left:{\footnotesize $b$}] (xb) at (-0.9,-0.4) {};
                    \node[intv, label=right:{\footnotesize $a'$}] (xc) at (0.9,0.4) {};
                    \node[intv, label=right:{\footnotesize $b'$}] (xd) at (0.9,-0.4) {};
                    \draw[unseene] (xa) -- (xb);
                    \draw[unseene] (xc) -- (xd);
                \end{scope}
                \node at (2.2,0) {$\Longrightarrow$};
                \begin{scope}[xshift=4.4cm]
                    \draw[rounded corners, fill=black!6, draw=black!30] (-1.6,-0.75) rectangle (-0.4,0.75);
                    \draw[rounded corners, fill=black!6, draw=black!30] (0.4,-0.75) rectangle (1.6,0.75);
                    \node[font=\footnotesize] at (-1.0,1.0) {$X\subseteq C_j$};
                    \node[font=\footnotesize] at (1.0,1.0) {$X'\subseteq C_{j+1}$};
                    \node[intv, label=left:{\footnotesize $a$}] (ya) at (-0.9,0.4) {};
                    \node[intv, label=left:{\footnotesize $b$}] (yb) at (-0.9,-0.4) {};
                    \node[intv, label=right:{\footnotesize $a'$}] (yc) at (0.9,0.4) {};
                    \node[intv, label=right:{\footnotesize $b'$}] (yd) at (0.9,-0.4) {};
                    \draw[removede] (ya) -- (yb);
                    \draw[removede] (yc) -- (yd);
                    \draw[betweene] (ya) -- (yc);
                    \draw[betweene] (yb) -- (yd);
                \end{scope}
            \end{tikzpicture}
            \caption{A single swap.}
            \label{fig:swap}
        \end{subfigure}
        \hfill
        \begin{subfigure}{0.40\textwidth}
            \centering
            \begin{tikzpicture}[scale=1.0]
                \draw[fill=black!6, draw=black!30] (0.000,1.550) circle (0.78);
                \node at (0.000,1.550) {$C_0$};
                \draw[fill=black!6, draw=black!30] (1.550,0.000) circle (0.78);
                \node at (1.550,0.000) {$C_1$};
                \draw[fill=black!6, draw=black!30] (0.000,-1.550) circle (0.78);
                \node at (0.000,-1.550) {$C_2$};
                \draw[fill=black!6, draw=black!30] (-1.550,0.000) circle (0.78);
                \node at (-1.550,0.000) {$C_3$};
                \draw[betweene] (0.078,0.991) -- (0.991,0.078);
                \draw[betweene] (0.247,1.161) -- (1.161,0.247);
                \draw[betweene] (0.389,1.303) -- (1.303,0.389);
                \draw[betweene] (0.559,1.472) -- (1.472,0.559);
                \draw[removede] (0.078,0.991) -- (0.247,1.161);
                \draw[removede] (0.389,1.303) -- (0.559,1.472);
                \draw[removede] (0.991,0.078) -- (1.161,0.247);
                \draw[removede] (1.303,0.389) -- (1.472,0.559);
                \draw[betweene] (0.991,-0.078) -- (0.078,-0.991);
                \draw[betweene] (1.161,-0.247) -- (0.247,-1.161);
                \draw[betweene] (1.303,-0.389) -- (0.389,-1.303);
                \draw[betweene] (1.472,-0.559) -- (0.559,-1.472);
                \draw[removede] (0.991,-0.078) -- (1.161,-0.247);
                \draw[removede] (1.303,-0.389) -- (1.472,-0.559);
                \draw[removede] (0.078,-0.991) -- (0.247,-1.161);
                \draw[removede] (0.389,-1.303) -- (0.559,-1.472);
                \draw[betweene] (-0.078,-0.991) -- (-0.991,-0.078);
                \draw[betweene] (-0.247,-1.161) -- (-1.161,-0.247);
                \draw[betweene] (-0.389,-1.303) -- (-1.303,-0.389);
                \draw[betweene] (-0.559,-1.472) -- (-1.472,-0.559);
                \draw[removede] (-0.078,-0.991) -- (-0.247,-1.161);
                \draw[removede] (-0.389,-1.303) -- (-0.559,-1.472);
                \draw[removede] (-0.991,-0.078) -- (-1.161,-0.247);
                \draw[removede] (-1.303,-0.389) -- (-1.472,-0.559);
                \draw[betweene] (-0.991,0.078) -- (-0.078,0.991);
                \draw[betweene] (-1.161,0.247) -- (-0.247,1.161);
                \draw[betweene] (-1.303,0.389) -- (-0.389,1.303);
                \draw[betweene] (-1.472,0.559) -- (-0.559,1.472);
                \draw[removede] (-0.991,0.078) -- (-1.161,0.247);
                \draw[removede] (-1.303,0.389) -- (-1.472,0.559);
                \draw[removede] (-0.078,0.991) -- (-0.247,1.161);
                \draw[removede] (-0.389,1.303) -- (-0.559,1.472);
                \node[intv, inner sep=1.1pt] at (0.078,0.991) {};
                \node[intv, inner sep=1.1pt] at (0.247,1.161) {};
                \node[intv, inner sep=1.1pt] at (0.389,1.303) {};
                \node[intv, inner sep=1.1pt] at (0.559,1.472) {};
                \node[intv, inner sep=1.1pt] at (0.991,0.078) {};
                \node[intv, inner sep=1.1pt] at (1.161,0.247) {};
                \node[intv, inner sep=1.1pt] at (1.303,0.389) {};
                \node[intv, inner sep=1.1pt] at (1.472,0.559) {};
                \node[intv, inner sep=1.1pt] at (0.991,-0.078) {};
                \node[intv, inner sep=1.1pt] at (1.161,-0.247) {};
                \node[intv, inner sep=1.1pt] at (1.303,-0.389) {};
                \node[intv, inner sep=1.1pt] at (1.472,-0.559) {};
                \node[intv, inner sep=1.1pt] at (0.078,-0.991) {};
                \node[intv, inner sep=1.1pt] at (0.247,-1.161) {};
                \node[intv, inner sep=1.1pt] at (0.389,-1.303) {};
                \node[intv, inner sep=1.1pt] at (0.559,-1.472) {};
                \node[intv, inner sep=1.1pt] at (-0.078,-0.991) {};
                \node[intv, inner sep=1.1pt] at (-0.247,-1.161) {};
                \node[intv, inner sep=1.1pt] at (-0.389,-1.303) {};
                \node[intv, inner sep=1.1pt] at (-0.559,-1.472) {};
                \node[intv, inner sep=1.1pt] at (-0.991,-0.078) {};
                \node[intv, inner sep=1.1pt] at (-1.161,-0.247) {};
                \node[intv, inner sep=1.1pt] at (-1.303,-0.389) {};
                \node[intv, inner sep=1.1pt] at (-1.472,-0.559) {};
                \node[intv, inner sep=1.1pt] at (-0.991,0.078) {};
                \node[intv, inner sep=1.1pt] at (-1.161,0.247) {};
                \node[intv, inner sep=1.1pt] at (-1.303,0.389) {};
                \node[intv, inner sep=1.1pt] at (-1.472,0.559) {};
                \node[intv, inner sep=1.1pt] at (-0.078,0.991) {};
                \node[intv, inner sep=1.1pt] at (-0.247,1.161) {};
                \node[intv, inner sep=1.1pt] at (-0.389,1.303) {};
                \node[intv, inner sep=1.1pt] at (-0.559,1.472) {};
            \end{tikzpicture}
            \caption{All swaps, for $r=3$.}
            \label{fig:repair-cycle}
        \end{subfigure}
        \caption{The repair procedure in the proof of \Cref{thm:one-sided-lb}, for $k=3$, so that $h=2$.
        (a) Unseen internal edges $\{a,b\}$ of a supernode $X$ of $C_j$ and $\{a',b'\}$ of a supernode $X'$ of $C_{j+1}$ (red) are replaced by the edges $\{a,a'\}$ and $\{b,b'\}$ (blue); all degrees and all queried positions are unchanged.
        (b) The components $C_0,\ldots,C_r$ are arranged in a cycle, and $h$ swaps are performed between every two consecutive components, creating $h$ new edges between them.}
        \label{fig:repair}
    \end{figure}


    It remains to show that $G^*$ is $k$-connected. Consider a cut $(A,B)$ of $G^*$.
    For each component $C_j$, the graph $G^*$ contains a copy of $G_T(N_0,d,k)$ or $G_T(N_1,d,k)$ on the vertices of $C_j$, with at most one internal edge removed from each supernode.
    If the cut separates two vertices of the same component, then at least $k$ edges cross it, by \Cref{clm:g_t-is-k-conn} applied to this subgraph.
    Otherwise, $A$ and $B$ are unions of components of $G$.
    Since the components are arranged in a cycle, there are at least two indices $j$ such that $C_j$ and $C_{j+1}$ are on different sides of the cut, and for each such $j$, the $h$ new edges created for the pair $(C_j,C_{j+1})$ cross the cut.
    So at least $2h\ge k$ edges cross the cut.
    Therefore, $G^*$ is $k$-connected, which completes the proof.
\end{proof}

\bibliography{references}

@STRING{focs = {Proceedings of IEEE Symposium on Foundations of Computer Science (FOCS)}}

@STRING{soda = {Proceedings of ACM-SIAM Symposium on Discrete Algorithms (SODA)}}

@inproceedings{BermanRY14,
  author    = {Piotr Berman and
               Sofya Raskhodnikova and
               Grigory Yaroslavtsev},
  title     = {{$L_p$}-testing},
  booktitle = {Symposium on Theory of Computing, {STOC} 2014, New York, NY, USA,
               May 31 - June 03, 2014},
  pages     = {164--173},
  year      = {2014},
}

@article{OR11,
  author       = {Yaron Orenstein and
                  Dana Ron},
  title        = {Testing Eulerianity and connectivity in directed sparse graphs},
  journal      = {Theor. Comput. Sci.},
  volume       = {412},
  number       = {45},
  pages        = {6390--6408},
  year         = {2011},
  url          = {https://doi.org/10.1016/j.tcs.2011.06.038},
  doi          = {10.1016/J.TCS.2011.06.038}
}

@inproceedings{FNSYY20,
  author    = {Sebastian Forster and
               Danupon Nanongkai and
               Thatchaphol Saranurak and
               Liu Yang and
               Sorrachai Yingchareonthawornchai},
  editor    = {Shuchi Chawla},
  title     = {Computing and Testing Small Connectivity in Near-Linear Time and Queries
               via Fast Local Cut Algorithms},
  booktitle = soda,
  pages     = {2046--2065},
  publisher = {{SIAM}},
  year      = {2020}
}

@article{GR02,
  author    = {Oded Goldreich and
               Dana Ron},
  title     = {Property Testing in Bounded Degree Graphs},
  journal   = {Algorithmica},
  volume    = {32},
  number    = {2},
  pages     = {302--343},
  year      = {2002},
}

@article{PR02,
  author    = {Michal Parnas and
               Dana Ron},
  title     = {Testing the diameter of graphs},
  journal   = {Random Struct. Algorithms},
  volume    = {20},
  number    = {2},
  pages     = {165--183},
  year      = {2002},
}

@article{RS06,
  author    = {Sofya Raskhodnikova and
               Adam D. Smith},
  title     = {A Note on Adaptivity in Testing Properties of Bounded Degree Graphs},
  journal   = {Electronic Colloquium on Computational Complexity {(ECCC)}},
  volume    = {13},
  number    = {089},
  year      = {2006},
}

@inproceedings{Yao77,
  author    = {Andrew Chi{-}Chih Yao},
  title     = {Probabilistic Computations: Toward a Unified Measure of Complexity
               (Extended Abstract)},
  booktitle = focs,
  pages     = {222--227},
  year      = {1977},
}

@article{YoshidaI10,
  author       = {Yuichi Yoshida and
                  Hiro Ito},
  title        = {Testing \emph{k}-edge-connectivity of digraphs},
  journal      = {J. Syst. Sci. Complex.},
  volume       = {23},
  number       = {1},
  pages        = {91--101},
  year         = {2010},
  url          = {https://doi.org/10.1007/s11424-010-9280-5},
  doi          = {10.1007/S11424-010-9280-5},
  bibsource    = {dblp computer science bibliography, https://dblp.org}
}

@article{YoshidaI12,
  author       = {Yuichi Yoshida and
                  Hiro Ito},
  title        = {Property Testing on k-Vertex-Connectivity of Graphs},
  journal      = {Algorithmica},
  volume       = {62},
  number       = {3-4},
  pages        = {701--712},
  year         = {2012},
  url          = {https://doi.org/10.1007/s00453-010-9477-y},
  doi          = {10.1007/S00453-010-9477-Y},
  bibsource    = {dblp computer science bibliography, https://dblp.org}
}

@article{LeviPRV22,
  author       = {Amit Levi and
                  Ramesh Krishnan S. Pallavoor and
                  Sofya Raskhodnikova and
                  Nithin Varma},
  title        = {Erasure-Resilient Sublinear-Time Graph Algorithms},
  journal      = {{ACM} Trans. Comput. Theory},
  volume       = {14},
  number       = {1},
  pages        = {1:1--1:22},
  year         = {2022},
  url          = {https://doi.org/10.1145/3488250},
  doi          = {10.1145/3488250},
  bibsource    = {dblp computer science bibliography, https://dblp.org}
}

@article{BenderR02,
  author       = {Michael A. Bender and
                  Dana Ron},
  title        = {Testing properties of directed graphs: acyclicity and connectivity},
  journal      = {Random Struct. Algorithms},
  volume       = {20},
  number       = {2},
  pages        = {184--205},
  year         = {2002},
  url          = {https://doi.org/10.1002/rsa.10023},
  doi          = {10.1002/RSA.10023},
  bibsource    = {dblp computer science bibliography, https://dblp.org}
}
\bibliographystyle{alpha}

\appendix
\section{Proof of \Cref{lem:localec}}\label{app:localec}

The algorithm, given as Algorithm~\ref{alg:localec}, is the algorithm of~\cite[Algorithm~1]{FNSYY20} with the parameters used for \Cref{lem:localec}.
It repeatedly grows a depth-first search (DFS) tree from $v$, stops it at a random point, and reverses the edges on the path from $v$ to the vertex at which it stopped.
An edge is \emph{marked} when it is explored for the first time, and it stays marked (even after being reversed).

\begin{algorithm}[h]
	\caption{$\localec(v,k,\rho)$, adapted from~\cite[Algorithm~1]{FNSYY20}}
	\begin{algorithmic}[1]
		\Require vertex $v$; integer $k\ge1$; real $\rho$ with $1\le\rho\le\frac{m}{130k}$
		\State $\tau \gets \ceil{128k\rho}$
		\Loop~$k$ times:
		\State $y \gets \textsc{nil}$
		\State Grow a DFS tree $T$ of the vertices reachable from $v$ in the current graph, as follows:
		\While{the DFS has an edge $e=(a,b)$ to explore}
		\If{$e$ is not marked}
		\State Mark $e$
		\If{at least $\tau$ edges are marked} \Return $\bot$ \label{line:budget}
		\EndIf
		\State With probability $\frac{1}{8\rho}$, set $y\gets a$ and stop the DFS
		\EndIf
		\EndWhile
		\If{$y = \textsc{nil}$} \Return $S = V(T)$ \label{line:return-set}
		\EndIf
		\State Reverse the direction of every edge on the path from $v$ to $y$ in $T$
		\EndLoop
		\State \Return $\bot$ \label{line:return-bot}
	\end{algorithmic}
\label{alg:localec}
\end{algorithm}

We use the following observation (see~\cite[Lemma~3.5]{FNSYY20}).

\begin{lemma}\label{lem:path-reversal}
    Let $S\subseteq V$ with $v\in S$, and let $P$ be a path from $v$ to a vertex $y$. Reversing all edges of $P$ decreases $\dout(S)$ by exactly one if $y\notin S$, and does not change $\dout(S)$ if $y\in S$.
\end{lemma}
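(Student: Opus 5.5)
The statement to prove is \Cref{lem:path-reversal}: reversing the edges of a path $P$ from $v\in S$ to $y$ lowers $\dout(S)$ by exactly one if $y\notin S$, and leaves it unchanged if $y\in S$. My plan is a counting argument over the boundary crossings of $P$, using the fact that reversal only affects edges of $P$.

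Write $P = (x_0=v, x_1,\ldots,x_\ell = y)$. Every edge not on $P$ is untouched, so only the edges $(x_{i-1},x_i)$ can change their contribution to $\dout(S)$. For each edge of $P$ there are four cases.
\begin{itemize}
  \item If both endpoints lie in $S$, or both lie in $V\setminus S$, the edge does not cross the cut before or after reversal, so its contribution is $0$ both times.
  \item If $x_{i-1}\in S$ and $x_i\notin S$ (an \emph{exit}), the edge contributes $1$ before reversal and $0$ after.
  \item If $x_{i-1}\notin S$ and $x_i\in S$ (an \emph{entry}), the edge contributes $0$ before reversal and $1$ after.
\end{itemize}
Hence the change in $\dout(S)$ equals $(\#\text{entries}) - (\#\text{exits})$.

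The membership indicator of $x_i$ in $S$ starts at $1$, since $x_0 = v\in S$, and changes exactly at exits (from $1$ to $0$) and at entries (from $0$ to $1$). These changes must alternate, beginning with an exit. So if $y\in S$, the numbers of exits and entries are equal and the change is $0$. If $y\notin S$, there is exactly one more exit than entry, and $\dout(S)$ drops by exactly one.

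One detail needs care: $\dout(S)$ counts edges, so I should check that reversal cannot create parallel edges that merge in the count. In the algorithm this is handled by treating edges as a multiset, so each reversed edge keeps its identity. I will state the count per edge, so that the argument holds for multigraphs as well. I do not expect any step to be a real obstacle; the only thing to get right is the parity bookkeeping along the path.
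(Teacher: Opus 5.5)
Your proof is correct and follows essentially the same argument as the paper. Both classify the cut-crossing edges of $P$ as exits or entries, and note that since $P$ starts in $S$ the exits outnumber the entries by one exactly when $y\notin S$ (and the two counts are equal otherwise); reversal then swaps the two types. Your remark about counting edges individually, so that parallel edges created by reversal cause no trouble, is a harmless refinement that the paper leaves implicit.
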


\begin{proof}
    Each edge of $P$ that crosses the cut $(S,V\setminus S)$ goes either from $S$ to $V\setminus S$ or from $V\setminus S$ to $S$. Since $P$ starts in $S$, the number of crossings of the first type exceeds the number of crossings of the second type by one if $y\notin S$, and the two numbers are equal if $y\in S$. Reversing $P$ swaps the two types.
\end{proof}

\begin{proof}[Proof of \Cref{lem:localec}]
    \emph{Complexity.}
    At most $\tau = O(k\rho)$ edges are marked. Each marked edge is obtained with one out-neighbor query, and each vertex visited by the DFS requires one out-degree query. Since every visited vertex other than $v$ is the head of a marked edge, the algorithm makes $O(k\rho)$ queries. Reversed edges are marked, so their current direction is known to the algorithm without further queries.
    As shown in~\cite[Lemma~3.4]{FNSYY20}, with suitable data structures, each of the $k$ iterations takes time linear in the number of marked edges, so the running time is $O(k^2\rho)$.

    \emph{Output.}
    Suppose that the algorithm returns $S = V(T)$ in Line~\ref{line:return-set} of iteration $r\le k$. Then $v \in S$, and no edge leaves $S$ in the current graph.
    All out-edges of vertices of $S$ in the current graph were explored by the DFS and are thus marked. Fewer than $\tau$ edges are marked, and $\tau \le \ceil{\frac{128m}{130}} \le m$, so the vertices of $S$ have fewer than $m$ out-edges in total, and therefore $S\neq V$.
    Before iteration $r$, exactly $r-1\le k-1$ paths starting at $v$ were reversed. By \Cref{lem:path-reversal}, each of them decreased $\dout(S)$ by at most one, so $\dout(S)\le k-1$ in the original graph.

    \emph{Success probability.}
    Suppose that there exist $\gamma\in[k]$ and $S'\ni v$ with $\vol(S')\le\rho\gamma$ and $\dout(S')\le k-\gamma$.
    The algorithm returns $\bot$ only in one of the following two ways.
    
    First, it may reach the budget in Line~\ref{line:budget}. Every newly marked edge, except the one that reaches the budget, triggers an independent coin flip with success probability $p = \frac{1}{8\rho}$, and every success ends an iteration. Thus, Line~\ref{line:budget} is reached only if fewer than $k$ of the first $\tau-1$ coin flips succeed, that is, only if the number $Y$ of coin flips needed for $k$ successes is at least $\tau$. Since $\E[Y] = k/p = 8k\rho$, by Markov's inequality, this happens with probability at most $\frac{8k\rho}{\tau}\le\frac{1}{16}$.

    Second, it may complete all $k$ iterations, setting $y$ in each of them, and return $\bot$ in Line~\ref{line:return-bot}.
    Let $Z$ be the number of iterations in which $y$ is set to a vertex of $S'$.
    By \Cref{lem:path-reversal}, each iteration in which $y\notin S'$ decreases $\dout(S')$ by one. Moreover, once $\dout(S') = 0$ in the current graph, no vertex outside $S'$ is reachable from $v$, so $y\in S'$ in all later iterations.
    Hence, at most $k-\gamma$ iterations have $y\notin S'$, and returning $\bot$ in Line~\ref{line:return-bot} implies $Z\ge\gamma$.
    Now, $y$ is set to a vertex of $S'$ only when the coin flip of a newly marked edge $(a,b)$ with $a\in S'$ succeeds. A newly marked edge has never been reversed, since only marked edges are reversed, so $(a,b)$ is one of the $\vol(S')\le\rho\gamma$ out-edges of vertices of $S'$ in the original graph. Each edge is newly marked at most once, so $\E[Z]\le \rho\gamma\cdot p = \frac{\gamma}{8}$, and by Markov's inequality, $\Pr[Z\ge\gamma]\le\frac18$.

    Therefore, the algorithm returns $\bot$ with probability at most $\frac{1}{16}+\frac{1}{8}<\frac14$.
\end{proof}

\section{Testing when $\eps\davg$ is large}\label{app:large-eps}
 
In this section, we show that testing $k$-edge-connectedness of simple graphs is easy when $\eps\davg$ is at least a sufficiently large constant: it suffices to check the degrees of a few random vertices.
 
\begin{lemma}\label{lem:large-eps}
    The following algorithm is a one-sided error $\eps$-tester for $k$-edge-connectedness of simple directed graphs with average degree $\davg \ge \frac{2}{\eps}$: sample $s = \ceil{\frac{2k\ln 3}{\eps\davg}}$ vertices uniformly at random, and reject if and only if one of them has out-degree or in-degree less than $k$.
    It makes $O\big(\frac{k}{\eps\davg}\big)$ queries.
    For simple undirected graphs with $\davg \ge \frac{4}{\eps}$, the same algorithm works with $s = \ceil{\frac{4k\ln 3}{\eps\davg}}$.
\end{lemma}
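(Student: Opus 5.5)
The plan is to show one-sided error, the query bound, and then soundness. The soundness step is the only substantive one. One-sided error is immediate: in a $k$-edge-connected digraph every vertex has out- and in-degree at least $k$ (consider $S=\{v\}$), so the algorithm never rejects. The query bound is also immediate, since each sample costs two degree queries.

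For soundness, I take a graph that is $\eps$-far and invoke \Cref{fact:far-characterization}. By symmetry (reversing edges) I assume there is a family $\cF$ of disjoint nonempty proper subsets with $\sum_{S\in\cF}(k-\dout(S))>\eps m$, after discarding sets with $\dout(S)\ge k$. Each remaining set has slack at most $k$, so $|\cF|>\eps m/k$.

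The key step is to show that most sets in $\cF$ must contain a vertex of out-degree less than $k$, using simplicity and the large average degree. Suppose $S\in\cF$ has all out-degrees at least $k$. Then the argument in the proof of \Cref{clm:exp-k-quality} gives $|S|\ge k+1$. This alone does not bound the number of such sets, so I need a volume argument instead. Since $\sum_S |S|\le n$, at most $n/(k+1)$ such sets exist. I then compare this with $\eps m/(2k)=\eps n\davg/(2k)$. The hypothesis $\davg\ge 2/\eps$ gives $\eps n\davg/(2k)\ge n/k>n/(k+1)$. Hence the sets containing no low-out-degree vertex number fewer than $\eps m/(2k)$. Their slack therefore totals less than $\eps m/2$, and the sets in $\cF$ containing a low-out-degree vertex carry slack greater than $\eps m/2$.

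It follows that there are more than $\eps m/(2k)=\eps n\davg/(2k)$ disjoint sets each containing a vertex with $\dout<k$. So a uniform sample is such a vertex with probability at least $\eps\davg/(2k)$. With $s=\ceil{2k\ln 3/(\eps\davg)}$ samples, the failure probability is at most $e^{-\ln 3}=1/3$. In the reversed case, the algorithm checks in-degrees, so the same argument applies.

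For undirected graphs I use \Cref{lem:reduction-soundness}. The directed version $\Gdir$ is $\eps/2$-far and has the same average degree, and degrees in $G$ equal the out- and in-degrees in $\Gdir$. Substituting $\eps/2$ turns $\davg\ge 2/(\eps/2)=4/\eps$ and $s=\ceil{4k\ln 3/(\eps\davg)}$ into the stated parameters.

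The main obstacle is the counting step, bounding the slack of sets with no low-degree vertex. A fallback is needed if the bound $|S|\ge k+1$ turns out too weak with these constants. In that case I would sharpen it using simplicity: each vertex of $S$ has at least $k-\dout(S)$ out-neighbors inside $S$, and the argument can exploit the trade-off between slack and the lower bound on $|S|$. I expect the stated constant 2 to suffice with the plain $|S|\ge k+1$ bound and the strict inequality.
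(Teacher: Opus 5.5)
Your proposal is correct and follows essentially the same route as the paper. Both use the bound $|S|\ge k+1$ from simplicity, use disjointness to get at most $n/(k+1)$ sets without low-out-degree vertices with total slack below $\eps m/2$, conclude there are more than $\eps m/(2k)$ low-out-degree vertices, and handle the undirected case through $\Gdir$ with parameter $\eps/2$; the fallback in your last paragraph is unnecessary, since the plain bound already works with the stated constants.
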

 
\begin{proof}
    If $G$ is $k$-edge-connected, every vertex has out- and in-degree at least $k$, so the algorithm always accepts.
 
    Now let $G=(V,E)$ be a simple directed graph with $m$ edges that is $\eps$-far from $k$-edge-connected.
    By \Cref{fact:far-characterization}, there is a family $\cF$ of pairwise disjoint nonempty proper subsets of $V$ such that $\sum_{S\in\cF}(k-\dout(S))>\eps m$ or $\sum_{S\in\cF}(k-\din(S))>\eps m$.
    Since the algorithm checks both out- and in-degrees, we assume the former holds wlog.
    Remove from $\cF$ all sets $S$ with $\dout(S)\ge k$.
    As shown in the proof of \Cref{clm:exp-k-quality}, every set in $\cF$ whose vertices all have out-degree at least $k$ has at least $k+1$ vertices.
    Since the sets in $\cF$ are disjoint, there are at most $\frac{n}{k+1}$ such sets, and they contribute at most $\frac{kn}{k+1} < n \le \frac{\eps m}{2}$ to the sum, where the last inequality follows from $m = \davg n$ and $\eps\davg\ge 2$.
    Therefore, the remaining sets contribute more than $\frac{\eps m}{2}$ to the sum.
    Each of them contributes at most $k$ and contains a vertex of out-degree less than $k$, so more than $\frac{\eps m}{2k} = \frac{\eps\davg}{2k}\cdot n$ vertices have out-degree less than $k$.
    Hence, the probability that all $s$ sampled vertices have out-degree at least $k$ is at most $\left(1-\frac{\eps\davg}{2k}\right)^{s} \le e^{-\ln 3} = \frac13$, and the algorithm rejects with probability at least $\frac23$.
    The algorithm makes $2s = O\big(\frac{k}{\eps\davg}\big)$ degree queries.
 
    For a simple undirected graph $G$ with $\eps\davg\ge 4$, consider its directed version $\Gdir$, which is simple, has average out-degree $\davg$, and in which the out- and in-degree of every vertex equal its degree in $G$.
    Running the directed algorithm on $\Gdir$ with parameter $\eps/2$, which satisfies $\frac{\eps}{2}\davg\ge 2$, gives exactly the undirected algorithm, and its correctness follows from \Cref{lem:reduction-completeness,lem:reduction-soundness}.
\end{proof}

\end{document}